\newtheorem{theorem}{Theorem}
\newtheorem{proposition}{Proposition}
\newtheorem{lemma}{Lemma}
\DeclareMathOperator*{\adisc}{\mathrm{asymdisc}}
\DeclareMathOperator*{\wdisc}{\mathrm{wdisc}}
\title{Tight Asymptotic Bounds for Fair Division With Externalities}
\author{
\textbf{Frank Connor}\\
McGill University\\
\texttt{frank.connor@mail.mcgill.ca}
\and
\textbf{Max Dupr\'{e} la Tour}\\
McGill University\\
\texttt{maxduprelatour@gmail.com}
\and
\textbf{Vishnu V. Narayan}\\
Indian Institute of Technology Bombay\\
\texttt{vishnu.narayan@mail.mcgill.ca}
\and
\textbf{Šimon Schierreich}\\
AGH University of Krakow,\\
Czech Technical University in Prague\\
\texttt{schiesim@fit.cvut.cz}
}
\date{January 19, 2026}
\newcommand{\EFa}{EF$_{a}$\xspace}
\newcommand{\EFac}{EF$_a$-$c$\xspace}
\newcommand{\Oh}[1]{{\mathcal{O}\left(#1\right)}}
\newcommand{\Omh}[1]{\Omega\left(#1\right)}
\newcommand{\hy}{\hbox{-}\nobreak\hskip0pt}
\NewDocumentCommand{\cc}{ O{} O{} m }{\mbox{%
    \expandafter\ifx\expandafter\relax\detokenize{#2}\relax\else{#2\hy}\fi%
    \textsf{#3}%
    \expandafter\ifx\expandafter\relax\detokenize{#1}\relax\else{\hy#1}\fi%
    }\xspace}
\let\oldabstract\abstract
\let\oldendabstract\endabstract
\renewenvironment{abstract}
{%
               {\list{}{\addtolength{\leftmargin}{1.1em} 
                        \listparindent 1.5em%
                        \itemindent    \listparindent%
                        \rightmargin   \leftmargin%
                        \parsep        \z@ \@plus\p@}%
                \item\relax}%
               {\endlist}%
\oldabstract}
{\oldendabstract}
\begin{document}

\maketitle

\begin{abstract}
    We study the problem of allocating a set of indivisible items among agents whose preferences include externalities. Unlike the standard fair division model, agents may derive positive or negative utility not only from items allocated directly to them, but also from items allocated to other agents. Since exact envy-freeness cannot be guaranteed, prior work has focused on its relaxations. However, two central questions remained open: does there always exist an allocation that is envy-free up to one item (EF1), and if not, what is the optimal relaxation EF-$k$ that can always be attained?
    
    We settle both questions by deriving tight asymptotic bounds on the number of items sufficient to eliminate envy. We show that for any instance with $n$ agents, an allocation that is envy-free up to $\Oh{\sqrt{n}}$ items always exists and can be found in polynomial time, and we prove a matching $\Omh{\sqrt{n}}$ lower bound showing that this result is tight even for binary valuations, which rules out the existence of EF1 allocations when agents have externalities.
\end{abstract}


\section{Introduction}

Fair division of indivisible items, including administrative tasks, laboratory equipment, or heritage between heirs, is an active research area that connects computer science and economics~\citep{BramsT1996,Thomson2016,AmanatidisABFLMVW2023,NguyenR2023}.
Traditionally, a fair division instance consists of a set of indivisible items and a set of agents, where each agent has a valuation function assigning a numerical utility to each possible bundle. Crucially, in the standard model, the utility of an agent depends solely on the items allocated to it. The goal is to partition the items among agents so that a given fairness criterion is satisfied. A prominent example is \emph{envy-freeness} (EF)~\citep{Foley1967}, which requires that each agent weakly prefers its own bundle over the bundle of any other agent. Since EF allocations are not guaranteed to exist (consider two agents and one positively-valued item), several relaxations have been studied, most notably \emph{envy-freeness up to one item} (EF1)~\citep{LiptonMMS2004,Budish2011,CaragiannisKMPS19} and \emph{envy-freeness up to any item}~(EFX)~\citep{CaragiannisKMPS19,PlautR2020}. It is known that EF1 allocations always exist in the standard fair division model~\mbox{\citep{LiptonMMS2004}}.

Nevertheless, there are many scenarios where the utility an agent experiences cannot depend solely on its own bundle. Consider a dormitory allocating newly purchased equipment among students: a student benefits not only from items given directly to them, but also from items assigned to their roommate, with whom they can share. Conversely, when dividing resources among competing firms, such as exclusive licenses, patents, or territory rights, a firm may be harmed when a rival acquires a valuable asset, even if the firm's own allocation remains unchanged. Externalities can also be mixed: when distributing research grants among academics, a researcher may benefit if a collaborator receives funding for a joint project but suffer if a competitor in the same field does.

All these examples have one aspect in common: the inherent \emph{externalities} underlying the agents' valuations. Although externalities are well studied in the broader economics literature on resource allocation~\citep{AyresK1969,KatzS1985,Velez2016}, most prior work in fair division focuses on \emph{divisible} items~\citep{BranzeiPZ2013,LiZZ2015}. The model of fair division of indivisible items with externalities was formulated only recently~\citep{SeddighinSG2019,MishraPG2022,AzizSSW2023,DeligkasEKS2024}. Most relevant to our work are \citet{AzizSSW2023}, who adapted the standard envy-based fairness notions to the externalities setting and proved the existence of EF1 allocations for some special cases, and \citet{DeligkasEKS2024}, who investigated the computational complexity of deciding whether a fair allocation exists. Despite this progress, perhaps the most fundamental question has remained open:
\begin{center}
    Is an EF1 allocation guaranteed to exist for every instance of fair division with externalities?
\end{center}
Our current understanding of this question is limited. EF1 allocations are known to always exist for two agents, and for three agents with binary valuations and no chores~\citep{AzizSSW2023}.\footnote{See \Cref{sec:prelims} for a formal definition of chores in the externalities model.} Beyond these special cases, even the existence of EF-$k$ allocations for any constant $k$ was unknown. \citet{AzizSSW2023} explicitly state that ``an important open question that remains from our work is whether there always exists an EF1 allocation [...] if the answer is negative, it would be reasonable to ask for the optimal relaxation EF-$k$ that can be attained.'' \citet{DeligkasEKS2024} conjectured that EF1 allocations always exist for binary valuations, acknowledging that a proof ``seems highly non-trivial'' given the extensive case analysis and computer verification required even for three agents.


\subsection{Our Contribution}

We show that EF1 allocations do not always exist, even if agents' preferences are binary, thus resolving the open question posed by both \citet{AzizSSW2023} and \citet{DeligkasEKS2024}. We supplement this by showing that the optimal relaxation EF-$k$ that can be attained is for $k=\Theta(\sqrt{n})$, where~$n$ is the number of agents, answering the second open question of \citet{AzizSSW2023}.

After a formal definition of the model of fair division with externalities in \Cref{sec:prelims}, we introduce the technical tools that we use to prove our results in \Cref{sec:toolbox}. There are two main ingredients. First, we introduce an \emph{asymmetric envy model} and show its equivalence with the externalities model. Valuations in the asymmetric envy model are closer to the standard additive valuations than the valuations in the original externalities model. That said, this model is mostly of theoretical interest: the valuations are not easy to interpret, and even a social-welfare maximizing allocation is not well-defined with these valuations. The second key component of our work is the framework of \emph{multicolored discrepancy}~\citep{DoerrS2003,ManurangsiS2022,CaragiannisLS2025,ManurangsiM2026}. Importantly, we generalize some of the discrepancy bounds known from prior work to settings with a mixture of goods and chores, which may be of independent interest.

In \Cref{sec:upperbound}, we use the technical toolbox developed in \Cref{sec:toolbox} to show that, for any fair division with externalities, there exists a deterministic polynomial-time mechanism~$\mathcal{M}_1$ that finds an allocation that is envy-free up to $\Oh{\sqrt{n}}$ items. Importantly, in \Cref{sec:lowerbound}, we show that the mechanism is asymptotically optimal. Specifically, we show that there is an instance in the externalities model such that the best relaxation EF-$k$ we can achieve is for $k\in\Omh{\sqrt{n}}$, even if the valuations are binary and there are no chores. This settles the question of the existence of EF1 in the negative and therefore refutes the conjecture of \citet{DeligkasEKS2024}. Moreover, we show that the same lower bound holds for instances where the externalities are highly structured: even when all externalities are directed towards a single designated agent~$a_1$.

Before we conclude in \Cref{sec:conclusions} with open problems and future directions, in \Cref{sec:consensus}, we present an additional randomized mechanism~$\mathcal{M}_2$. This mechanism always produces an allocation that is EF up to $\Oh{n}$ items, which is a worse guarantee than that of mechanism~$\mathcal{M}_1$. However, unlike~$\mathcal{M}_1$, $\mathcal{M}_2$ produces a \emph{consensus} allocation, which means that we can arbitrarily permute the bundles while maintaining the fairness guarantee. Consequently, $\mathcal{M}_2$ is truthful in expectation, i.e., no agent can improve its expected utility by misreporting its valuations. This extends the result of~\citet{BuT2025} to the setting of externalities, with a slightly weaker bound on the relaxation of fairness that we obtain.


\subsection{Related Work}

Apart from the papers directly studying externalities in fair division of divisible and indivisible items mentioned above, there are several streams of related models in the literature.

First, there is a model of fair division with \emph{allocator's preferences}~\citep{BuLLST2023} or \emph{social impact}~\citep{FlamminiGV2025,DeligkasEGKS2026}. In both of these models, the goal is to find an allocation that is not only fair among agents, but also satisfies certain objectives of the allocator. This additional dimension of preferences can in principle be represented using externalities: the allocator is an additional agent with externalities towards all other agents. However, the crucial difference is that the allocator cannot receive any item and the original agents do not have any pairwise externalities. A more general model is \emph{fair division for groups}~\citep{ManurangsiS2017,Suksompong2018,SegalHaleviN2019,ManurangsiS2024,GolzY2025,DupreLaTour2026}, where agents are partitioned into groups and share items allocated to their group, while each agent retains their own subjective valuations, and for which constant relaxations of envy-freeness are known to not exist in most settings.

In \emph{asymptotic fair division}~\citep{ManurangsiS2020,ManurangsiS2021,ManurangsiS2025,ManurangsiSY2025,GargNS2026}, the task is not to decide which relaxation of fairness we can guarantee (since EF1 allocations always exist in the standard model), but rather to identify conditions under which stronger guarantees such as envy-freeness or proportionality can be achieved. Many of these works provide tight bounds with respect to the number of items required and rely heavily on results from discrepancy theory, similarly to our approach.

Externalities in preferences also appear in contexts beyond fair division. Notable examples include combinatorial auctions~\citep{Funk1996,JehielM1996,ZhangWWB2018}, matching~\citep{MumcuS2010,BranzeiMRLJ2013,FisherH2016,AnshelevichBH2017}, facility location~\citep{LiMXZZ2019,WangZL2024,LiPWZ2025}, and house allocation~\citep{MassandS2019,ElkindPTZ2020,AgarwalEGISV2021,GrossHumbertBBM2022,KnopS2023}. In these settings, an agent is typically assigned only one item (a partner, location, or house), and the solution concepts differ substantially from ours.


\section{Preliminaries}\label{sec:prelims}

In the fair division problem, the input is a set $M$ of indivisible items and a set $N$ of $n$ agents. The items in $M$ are to be allocated among the agents in $N$. Formally, an \emph{allocation} of the items is a tuple of pairwise disjoint subsets of $M$: $(A_1, \dots, A_n)$, where each $A_i \subseteq M$ is the bundle assigned to agent $i$. We say that allocation is \emph{complete} if $\bigcup_{i\in[n]} A_i = M$. Unless stated otherwise, we consider only complete allocations. Let $\mathcal{A}$ be the set of all allocations. Additionally, we denote the set $\{1,\ldots,n\}$ by $[n]$.

In the fair division with externalities model of~\citet{AzizSSW2023}, each agent $i \in N$ is associated with a function
\[
V_i \colon \mathcal{A} \to \mathbb{R},
\]
which assigns a value to every allocation $A \in \mathcal{A}$. We assume that valuations are \emph{additive}: for any agent
$i \in N$ and any allocation $(A_1,\dots,A_n) \in \mathcal{A}$,
\[
V_i(A) = \sum_{j \in N} \sum_{x \in A_j} V_i\bigl(x^{(j)}\bigr),
\]
where $x^{(j)} \in \mathcal{A}$ denotes the allocation in which agent $j$ receives
the single item $x$ and all other agents receive no items, that is,
\[
x^{(j)} = (\emptyset, \dots, \emptyset,
\underbrace{\{x\}}_{\mathclap{\text{$j^\text{th}$ coordinate}}},
\emptyset, \dots, \emptyset).
\]

To simplify the notation, we write $V_i(j,x)$ for $V_i\bigl(x^{(j)}\bigr)$ and therefore
\[
V_i(A) = \sum_{j \in N} \sum_{x \in A_j} V_i(j,x).
\]

Note that additivity implies, in particular, that valuations are
\emph{normalized}: for every agent $i \in N$, the empty allocation has value zero,
i.e., $V_i(\emptyset,\dots,\emptyset)=0$.

We say that the valuation $V_i$ of agent $i \in N$ has \emph{no chores} if, for every agent $j \in N$ and every item $x \in M$, we have
\[
V_i(i,x) \geq V_i(j,x).
\]
That is, agent $i$ weakly prefers receiving item $x$ herself to another agent receiving $x$.

We say that $V_i$ is \emph{binary} if, for all $j \in N$ and all $x \in M$,
\[
V_i(j,x) \in \{0,1\}.
\]


\subsection{Fairness in the Externalities Model}

Given an allocation $A \in \mathcal{A}$ and two distinct agents $i,j \in N$, we denote by $A^{i \leftrightarrow j}$ the allocation obtained from $A$ by swapping the bundles of agents $i$ and $j$, while the bundles of all other agents remain unchanged. In an allocation $A$, agent~$i$ \emph{envies} agent $j$ if
\[
V_i(A) < V_i\bigl(A^{i \leftrightarrow j}\bigr),
\]
that is, agent $i$ would prefer to swap her bundle with that of agent $j$. 
An allocation $A$ is \emph{envy-free} (EF) if no agent envies another agent. $A$ is \emph{envy-free up to $c$ items} (EF-$c$) if, for every pair of distinct agents $i,j \in N$, there exists a set of items $S \subseteq M$ with $|S| \leq c$ such that, defining $B = (A_1 \setminus S, \dots, A_n \setminus S)$, we have
\[
V_i(B) \geq V_i\bigl(B^{i \leftrightarrow j}\bigr).
\]
That is, for each pair of agents $i$ and $j$, there exists a set of at most $c$ items whose removal eliminates agent $i$’s envy toward agent $j$.

Note that, since valuations are assumed to be additive, the set $S$ used to eliminate agent $i$’s envy toward agent $j$ can always be chosen as a subset of $A_i \cup A_j$, as the items allocated to other agents contribute equally to $V_i(B)$ and $V_i(B^{i \leftrightarrow j})$. Moreover, if the valuation $V_i$ has no chores, the set $S$ can be chosen as a subset of $A_j$. Finally, in the absence of externalities (i.e. when $V_i(j,x) = 0$ for every item $x$ and pair of agents $i,j$) this definition coincides with the standard definition of envy-freeness up to $c$ items.


\section{Technical Tools}\label{sec:toolbox}


In this section, we present some technical tools that we use in our proofs. First, we present an alternate model of fair division instances, the asymmetric envy model, and prove that this model is equivalent for our purposes. Next, we describe the multicolor discrepancy problem and its connections to fair division. Most of our results derive fair division bounds from bounds in multicolor discrepancy. 


\subsection{Asymmetric Envy Model}

We now introduce the \emph{asymmetric envy model}, which is an alternate fair division model that allows for externalities among the agents' valuation functions. We will show that the asymmetric envy model is equivalent to the externalities model with respect to envy-based notions of fairness. We analyze most of our results using the asymmetric envy model, as the envy-based fairness notions in this model are easier to work with. This model does not assign a numerical value to each agent for a complete allocation; consequently, notions such as welfare maximization are not well defined. Instead, it specifies a notion of envy between pairs of distinct agents $i,j \in N$. The model is \emph{asymmetric} in the sense that the function defining the envy of agent $i$ toward agent $j$ depends on the ordered pair $(i,j)$. Envy from $i$ towards $j$ is captured by a valuation function $v_{i,j}$, and the standard notions of envy-freeness and envy-freeness up to $c$ items can be defined using these valuation functions. More precisely, for each pair of agents $i,j \in N$ with $i \neq j$, we are given a valuation function
\[
v_{i,j} \colon 2^M \to \mathbb{R}.
\]
We assume that these valuations are \emph{additive}, that is, for every set $S \subseteq M$,
\[
v_{i,j}(S) = \sum_{x \in S} v_{i,j}(x).
\]
We say that $v_{i,j}$ has \emph{no chores} if $v_{i,j}(x) \geq 0$ for all $x \in M$.
We say that the instance is \emph{binary} if $v_{i,j}(x)\in\{0,1\}$ for all $i\neq j$ and all $x\in M$.


\subsubsection{Fairness in the Asymmetric Envy Model}

Given an allocation $(A_1, \dots, A_n) \in \mathcal{A}$, agent $i$ is said to \emph{envy} agent $j$ if
\[
v_{i,j}(A_i) < v_{i,j}(A_j).
\]
An allocation $A$ is \emph{envy-free in the asymmetric envy model} (\EFa) if no agent envies another agent. $A$ is \emph{envy-free up to~$c$ items in the asymmetric envy model} (EF$_{a}$-$c$) if, for every pair of distinct agents $i,j \in N$, there exists a set of items $S \subseteq A_i \cup A_j$ with $|S| \leq c$ such that
\[
v_{i,j}(A_i \setminus S) \geq v_{i,j}(A_j \setminus S).
\]
Note that if $v_{i,j}$ has no chores, the set $S$ can always be chosen as a subset of $A_j$.

Next, we show that the asymmetric envy model is equivalent to the externalities model with respect to envy-based fairness notions such as envy-freeness and EF-$c$. That is, there exists a natural reduction from instances of the externalities model to those of the asymmetric envy model, such that any envy bound in the asymmetric envy model implies an equivalent bound in the externalities model.

\begin{proposition}\label{prop:equivalent}
    There exists a function $f$ that maps each instance $I$ of the externalities model to an instance $f(I)$ of the asymmetric envy model such that, for every allocation $A \in \mathcal{A}$, the allocation $A$ is EF (respectively, EF-$c$) in $I$ if and only if it is \EFa (respectively, \EFac) in $f(I)$. Moreover, if $I$ has no chores then $f(I)$ has no chores. Additionally: 
    \begin{itemize}
        \item For any instance $J$ in the asymmetric envy model with \emph{additive} valuations, there is an instance $I$ in the externalities model such that $J = f(I)$ (i.e., $f$ is surjective).
        \item For any instance $J$ in the asymmetric envy model with \emph{binary additive} valuations, there is a \emph{binary no-chores} instance $I$ in the externalities model such that $J = f(I)$.
    \end{itemize}
\end{proposition}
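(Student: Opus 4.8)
The plan is to let $f$ act through the single natural formula
$v_{i,j}(x) := V_i(i,x) - V_i(j,x)$ for every ordered pair $i \neq j$ and every $x \in M$, extended additively over subsets of $M$; additivity of $V_i$ immediately makes each $v_{i,j}$ additive, so $f(I)$ is a legitimate instance of the asymmetric envy model. After that, the entire proposition collapses onto one swap-difference identity. Fixing an allocation $A=(A_1,\dots,A_n)$ and agents $i\neq j$ and expanding $V_i(A^{i\leftrightarrow j})-V_i(A)$, every contribution from a bundle $A_k$ with $k\notin\{i,j\}$ cancels, and the surviving terms regroup as
\[
V_i\bigl(A^{i\leftrightarrow j}\bigr)-V_i(A)
=\sum_{x\in A_j}\bigl(V_i(i,x)-V_i(j,x)\bigr)-\sum_{x\in A_i}\bigl(V_i(i,x)-V_i(j,x)\bigr)
= v_{i,j}(A_j)-v_{i,j}(A_i).
\]
Hence $i$ envies $j$ in $I$ in the externalities sense (i.e.\ $V_i(A)<V_i(A^{i\leftrightarrow j})$) exactly when $v_{i,j}(A_j)>v_{i,j}(A_i)$, which is precisely when $i$ envies $j$ in $f(I)$; quantifying over all ordered pairs gives the EF $\Leftrightarrow$ \EFa equivalence.

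For the EF-$c$ part I would apply the same identity to the reduced allocation $B=(A_1\setminus S,\dots,A_n\setminus S)$ in place of $A$, obtaining $V_i(B^{i\leftrightarrow j})-V_i(B)=v_{i,j}(A_j\setminus S)-v_{i,j}(A_i\setminus S)$, so that the externalities condition ``$V_i(B)\ge V_i(B^{i\leftrightarrow j})$'' is equivalent to ``$v_{i,j}(A_i\setminus S)\ge v_{i,j}(A_j\setminus S)$''. The only mismatch between the two model definitions is that the externalities version allows $S\subseteq M$ whereas the asymmetric version requires $S\subseteq A_i\cup A_j$; but the right-hand side depends on $S$ only through $S\cap(A_i\cup A_j)$, so a witness $S$ in $I$ can be replaced by $S\cap(A_i\cup A_j)$ without increasing its cardinality (this is exactly the remark already recorded in \Cref{sec:prelims}), and conversely a witness $S\subseteq A_i\cup A_j$ in $f(I)$ is also a valid $S\subseteq M$ in $I$. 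Thus $A$ is EF-$c$ in $I$ iff it is \EFac in $f(I)$. No-chores preservation is then immediate: $V_i(i,x)\ge V_i(j,x)$ for all $j,x$ is exactly $v_{i,j}(x)\ge 0$ for all $j,x$.

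For surjectivity onto additive instances, given $J=\{v_{i,j}\}$ I would exhibit a preimage by, e.g., setting $V_i(i,x)=0$ and $V_i(j,x)=-v_{i,j}(x)$ for $j\neq i$ (extended additively); then $f(I)$ returns $J$ verbatim. For the binary case the naive negation fails because $-v_{i,j}(x)$ need not lie in $\{0,1\}$, so I would instead shift by one: set $V_i(i,x)=1$ and $V_i(j,x)=1-v_{i,j}(x)$ for $j\neq i$. When $v_{i,j}(x)\in\{0,1\}$ all values stay in $\{0,1\}$ (so $I$ is binary), $V_i(i,x)=1\ge V_i(j,x)$ always holds (so $I$ has no chores), and $V_i(i,x)-V_i(j,x)=v_{i,j}(x)$ (so $f(I)=J$). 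I do not anticipate a genuine obstacle: the argument is a careful unwinding of definitions around the swap-difference identity, and the only two spots needing a moment's attention are (i) trimming the EF-$c$ witness set to $A_i\cup A_j$ so it conforms to the asymmetric-model definition, and (ii) using the ``$+1$ shift'' rather than negation to keep the binary surjectivity inside binary no-chores instances.
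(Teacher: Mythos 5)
Your proposal is correct and follows essentially the same route as the paper: the same map $v_{i,j}(x)=V_i(i,x)-V_i(j,x)$, the same swap-difference identity for both the EF and EF-$c$ equivalences (your extra care in trimming the witness set to $A_i\cup A_j$ matches the remark the paper records in its preliminaries), and the same ``$+1$ shift'' for binary surjectivity. The only cosmetic difference is your choice of preimage in the general additive case ($V_i(i,x)=0$, $V_i(j,x)=-v_{i,j}(x)$ instead of the paper's $V_i(i,x)=1$, $V_i(j,x)=1-v_{i,j}(x)$), which is equally valid.
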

\begin{proof}
    Let $I$ be an externalities instance with valuations $(V_i)_{i\in N}$.
    Define $f(I)$ as follows: for every ordered pair $i\neq j$ and every item $x\in M$,
    \[
    v_{i,j}(x) \;:=\; V_i(i,x)-V_i(j,x),
    \]
    and extend additively to all bundles.
    
    Fix an allocation $A\in\mathcal{A}$ and distinct agents $i,j$.
    Since swapping $A_i$ and $A_j$ only changes the recipients of items in these two bundles, additivity gives
    \begin{align*}
    V_i\bigl(A^{i\leftrightarrow j}\bigr)-V_i(A)
    &=\sum_{x\in A_j}\bigl(V_i(i,x)-V_i(j,x)\bigr)\\
    &\;+\; \sum_{x\in A_i}\bigl(V_i(j,x)-V_i(i,x)\bigr)\\
    &=v_{i,j}(A_j)-v_{i,j}(A_i).
    \end{align*}
    Hence,
    \[
    V_i(A) < V_i\bigl(A^{i\leftrightarrow j}\bigr)
    \quad\Longleftrightarrow\quad
    v_{i,j}(A_i) < v_{i,j}(A_j),
    \]
    so envy comparisons coincide, and therefore EF in $I$ coincides with \EFa in $f(I)$.
    
    The same identity holds after discarding any set of items $S\subseteq M$ (i.e., for the allocation $A\setminus S$). Thus, for every pair $i\neq j$, there exists a set $S$ with $|S|\le c$ eliminating $i$'s envy toward $j$ in $I$ if and only if there exists such an $S$ eliminating envy in~$f(I)$. Therefore EF-$c$ in $I$ coincides with \EFac in~$f(I)$. Moreover, if $I$ has no chores, then for all $i\neq j$ and~$x$, $v_{i,j}(x)=V_i(i,x)-V_i(j,x)\ge 0$, thus $f(I)$ has no chores.
    
    For the additional claims:
    \begin{itemize}
        \item Given an additive instance $J$ with $(v_{i,j})_{i\neq j}$ in the asymmetric envy model, define $V_i(i,x)=1$ and $V_i(j,x)=1-v_{i,j}(x)$ for $j\neq i$. Then $V_i(i,x)-V_i(j,x)=v_{i,j}(x)$, so $f(I)=J$, implying the first claim.
        \item Given a binary additive instance $J$ with $(v_{i,j})_{i\neq j}$ in the asymmetric envy model, define $V_i(i,x)=1$ and $V_i(j,x)=1-v_{i,j}(x)$ for $j\neq i$. Then, since $v_{i,j}(x) \in \{0,1\}$, we have $V_i(j,x)\in\{0,1\}$, $V_i(i,x)\ge V_i(j,x)$, and $f(I)=J$, which imply the second claim.\qedhere
    \end{itemize}
\end{proof}


\subsection{Multicolor Discrepancy and Fair Division}

\citet{ManurangsiS2022} showed that tools from multicolor discrepancy can be leveraged to derive improved bounds for fairness notions of the form EF-$c$ in models with groups of agents. In this section, we describe the particular definitions and discrepancy guarantee that we will use later. We also extend their statement from the ``goods-only'' case to instances that may contain both goods and chores.

We present multicolor discrepancy in a formulation tailored to allocations of indivisible items. This is equivalent to the standard matrix-based definition, but is more convenient for our purposes. The version of multicolor discrepancy that we need is a recent generalization of~\citet{ManurangsiM2026}; it is an asymmetric setting in which valuations depend on the color.

Let $k \ge 1$ be an integer denoting the number of colors, and let $M$ be a set of $m$ items. Let $n_1 \ge \cdots \ge n_k$ be positive integers. For each $\ell \in [k]$, let $\mathcal{V}_\ell=\{v_{\ell,1},\dots,v_{\ell,n_\ell}\}$ be a collection of $n_\ell$ additive valuation functions, and define $\mathcal{V}=(\mathcal{V}_1,\dots,\mathcal{V}_k)$.

A \emph{$k$-coloring} of $M$ is a map $\chi:M\to [k]$, which induces a partition
\[
M=\chi^{-1}(1)\cup \cdots \cup \chi^{-1}(k).
\]

The \emph{asymmetric discrepancy} of $\mathcal{V}$ is defined as
\begin{align*}
\adisc(\mathcal{V},k)
:= &\min_{\chi:M\to [k]}\ \max_{\ell\in [k]}\\\ 
&\max_{i\in [n_\ell]}
\Bigl|\tfrac{1}{k}v_{\ell,i}(M)-v_{\ell,i}(\chi^{-1}(\ell))\Bigr|.
\end{align*}

In words, $\adisc(\mathcal{V},k)$ is the minimum, over all $k$-colorings of $M$, of the maximum (over colors and valuation functions) deviation between the value that $v_{\ell,i}$ assigns to its own color class and the proportional benchmark $\tfrac{1}{k}v_{\ell,i}(M)$.

\begin{theorem}[{\protect\cite[Theorem 5.4]{ManurangsiM2026}}]
    \label{thm:asym-disc}
    Assume that for every $\ell\in [k]$, $i\in [n_\ell]$, and $x\in M$, we have $v_{\ell,i}(x)\in [0,1]$. Then
    \[
    \adisc(\mathcal{V},k)\le \Oh{\sqrt{n_1}}.
    \]
    Moreover, a coloring achieving this bound can be found in deterministic polynomial time~\citep{LevyRR2017}.
\end{theorem}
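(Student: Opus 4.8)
The plan is to reduce the $k$-color problem to $\lceil\log_2 k\rceil$ successive rounds of ordinary two-color discrepancy minimization, organized along a balanced binary tree whose $k$ leaves are the eventual color classes, and then to run the algorithmic form of Spencer's theorem at every internal node. Concretely, the root holds all of $M$; an internal node at depth $t$ holds a block $B\subseteq M$ and is responsible for the $k/2^{t}$ leaf-colors descending from it, so when we split $B$ into two halves we only need to (approximately) halve the values of the at most $n_1\cdot k/2^{t}$ valuation functions $v_{\ell,i}$ whose target leaf lies below that node --- this is exactly where the asymmetry of the model helps, since it lets the count of relevant functions depend on $n_1$ rather than on $\sum_\ell n_\ell$. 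Since all $v_{\ell,i}(x)\in[0,1]$, each such split is a two-coloring instance on ground set $B$ with $N\le n_1 k/2^{t}$ constraints, so by the deterministic multiplicative-weight algorithm of \citet{LevyRR2017} (the constructive version of Spencer's bound) it can be found in polynomial time with additive error $\delta_t=\Oh{\sqrt{n_1 k}\cdot 2^{-t/2}}$ on each of its functions, ignoring for the moment a logarithmic factor addressed below. Performing this level by level, for $t=1,\dots,\lceil\log_2 k\rceil$, produces the entire $k$-coloring $\chi$ in polynomial time. (If $k$ is not a power of two, round up to $2^{\lceil\log_2 k\rceil}<2k$, or use an unbalanced tree; this only affects constants. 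An alternative is a direct partial-coloring argument on the $k$-coloring polytope, but the recursive-halving reduction keeps the algorithmic ingredients entirely standard.)

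The crux is the error accounting. Fix a color $\ell$ and an index $i\in[n_\ell]$, write $V=v_{\ell,i}(M)$, and follow the root-to-leaf path of leaf $\ell$; let $V_t$ denote the value of the block on this path at depth $t$, so $V_0=V$ and $V_{T}=v_{\ell,i}(\chi^{-1}(\ell))$ with $T=\lceil\log_2 k\rceil$. Each split guarantees $|V_t-V_{t-1}/2|\le\delta_t$, and iterating this telescopes to
\[
\Bigl|\,v_{\ell,i}(\chi^{-1}(\ell))-\tfrac1k V\,\Bigr|\ \le\ \sum_{t=1}^{T}\frac{\delta_t}{2^{\,T-t}}\ =\ \frac1k\sum_{t=1}^{T}\delta_t\,2^{t},
\]
using $2^{T}=k$. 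Substituting $\delta_t=\Oh{\sqrt{n_1 k}\cdot 2^{-t/2}}$ gives $\delta_t 2^{t}=\Oh{\sqrt{n_1 k}\cdot 2^{t/2}}$, a geometric series with ratio $\sqrt2$ whose sum is within a constant of its last term $\Oh{\sqrt{n_1 k}\cdot\sqrt{k}}=\Oh{k\sqrt{n_1}}$; dividing by $k$ yields $\Oh{\sqrt{n_1}}$ uniformly over $\ell$ and $i$, i.e. $\adisc(\mathcal{V},k)=\Oh{\sqrt{n_1}}$. The mechanism behind this cancellation is that moving one level toward the root doubles the number of functions that must be balanced (hence multiplies $\delta_t$ by roughly $\sqrt2$) but also doubles the value scale of the block, so the depth-$t$ error reaches the leaf weighted by only $2^{t}/k$, and $\sqrt2$-growth against $2$-fold decay converges.

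The main obstacle is the logarithmic slack in Spencer's theorem: for $N$ constraints over an $m$-element ground set the guaranteed two-color discrepancy is $\Oh{\sqrt{N\log(2m/N)}}$, not $\Oh{\sqrt N}$, and a clean bound with no dependence on $m$ requires controlling that factor. I would remove it the standard way: precede each node's two-coloring by a partial-coloring phase that integrally fixes all but $\Oh{N}$ of the still-fractional items in $B$ at essentially no cost, after which the remaining instance has $\Oh{N}$ elements and the log-free form of the bound applies; equivalently, at every node only $\Oh{N}$ coordinates ever need to be treated as ``live,'' which is the bookkeeping already present in the multicolor-discrepancy literature (\citet{DoerrS2003,ManurangsiS2022}). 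A secondary point is simply verifying that chaining $\lceil\log_2 k\rceil$ invocations of the deterministic algorithm keeps the total running time polynomial --- it does, since each invocation is polynomial and the tree has $O(k)\le O(m)$ nodes. Everything else (additivity of the $v_{\ell,i}$, feeding the $[0,1]$ normalization into Spencer's hypothesis, and the telescoping) is routine.
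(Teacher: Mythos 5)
The paper does not prove this statement---it is imported verbatim from \citet{ManurangsiM2026} (Theorem~5.4), with the algorithmic claim delegated to \citet{LevyRR2017}---so there is no in-paper proof to compare against. Your recursive-halving scheme is the standard route to such bounds (it is essentially the Doerr--Srivastav reduction of $k$-color to two-color discrepancy, which is also how the symmetric special case, Theorem~A.1 of \citet{ManurangsiS2022}, is proved). Your two key points are right: the asymmetry is exploited exactly by counting only the $\sum_{\ell \text{ below the node}} n_\ell \le n_1\cdot k/2^t$ functions whose target leaf lies under the current node, and the telescoping $\bigl|v_{\ell,i}(\chi^{-1}(\ell))-\tfrac1k v_{\ell,i}(M)\bigr|\le \tfrac1k\sum_t \delta_t 2^t$ with $\delta_t=\Oh{\sqrt{n_1k}\,2^{-t/2}}$ correctly trades the $\sqrt2$-growth of per-level error against the $2$-fold decay of its weight at the leaf. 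The removal of the $\log$ factor by first moving to a vertex of the polytope so that only $\Oh{N}$ coordinates remain fractional is also the standard and correct fix.

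The one genuine gap is your treatment of $k$ not a power of two. The first fix you offer---``round up to $2^{\lceil\log_2 k\rceil}$''---does not work as stated: a balanced tree with $2^T>k$ leaves drives each leaf value toward $v_{\ell,i}(M)/2^T$, whereas the benchmark in $\adisc$ is $v_{\ell,i}(M)/k$, and these differ by up to $\Theta(v_{\ell,i}(M)/k)$, which is not $\Oh{\sqrt{n_1}}$ in general (and you cannot absorb the surplus $2^T-k$ dummy leaves back into real colors without destroying the balance). The second fix---an unbalanced tree splitting $k'$ leaves into $\lceil k'/2\rceil$ and $\lfloor k'/2\rfloor$---is the right one, but then the subroutine at each node is no longer balanced two-color discrepancy: you need, for each relevant $v$, that $\bigl|v(B_0)-\tfrac{\lceil k'/2\rceil}{k'}v(B)\bigr|\le\delta$, i.e.\ a $p$-weighted (lopsided) discrepancy guarantee with $p\in[1/2,2/3]$. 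An $\Oh{\sqrt N}$ bound for this weighted version does hold and has constructive forms, but it is a different statement from Spencer's theorem and must be invoked explicitly; as written, your argument is complete only for $k$ a power of two. Everything else is routine, as you say.
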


In the symmetric case where $n = n_1 = \cdots = n_k$ and where $v_{\ell,i} = v_{\ell',i}$ for all $i \in [n]$ and all $\ell,\ell' \in [k]$, this result coincides exactly with Theorem~A.1 of~\citet{ManurangsiS2022}.

We will use Theorem~\ref{thm:asym-disc} by combining it with the following lemma to obtain \EFac guarantees. A related implication is implicit in the proof of the upper bound in~\citet{ManurangsiS2022}; here, we prove an extension that allows for a mixture of goods and chores.

\begin{restatable}{lemma}{lemdisctoefk}\label{lem:disc-to-efk}
    Let $k\ge 1$ and let $T$ be a positive integer threshold. Let $v:2^M\to \mathbb{R}$ be an additive valuation function that may assign both positive and negative values to items.
    Then there exist four auxiliary additive valuation functions $v^1,v^2,v^3,v^4:2^M\to \mathbb{R}$ such that $v^j(x)\in [0,1]$ for all $j\in[4]$ and all $x\in M$, and with the following property:
    If $A,B\subseteq M$ are two disjoint sets of items satisfying for all $j\in[4]$ and all $X\in\{A,B\}$
    \begin{equation}\label{eq:disc-to-efk-assump}
        \Bigl|\tfrac{1}{k}v^j(M)-v^j(X)\Bigr|\le T,
    \end{equation}
    then one can eliminate $A$'s envy for $B$ by discarding at most~$14T$ items from $A\cup B$, and similarly, one can eliminate $B$'s envy for $A$ by discarding at most $14T$ items from $A\cup B$.
\end{restatable}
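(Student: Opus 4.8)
Eliminating $A$'s envy for $B$ means finding $S\subseteq A\cup B$ with $|S|\le 14T$ and $v(A\setminus S)\ge v(B\setminus S)$; by additivity this is equivalent to $v(B\cap S)-v(A\cap S)\ge D$, where $D:=v(B)-v(A)$. If $D\le 0$ take $S=\emptyset$. Otherwise, for a fixed size the optimal $S$ greedily takes the items maximizing $\psi(x):=v(x)\,\mathbf{1}[x\in B]-v(x)\,\mathbf{1}[x\in A]$ among those with $\psi(x)>0$, i.e.\ $B$'s most valuable goods together with $A$'s most burdensome chores. So it suffices to show that the $14T$ items of largest $\psi$ already accumulate $\psi$-value at least $D$. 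The twin statement for $B$'s envy of $A$ follows by swapping $A\leftrightarrow B$, under which both the hypothesis and the functions defined below are invariant.

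\textbf{The four functions.} Partition $M$ into goods $P^+=\{x:v(x)>0\}$ and chores $P^-=\{x:v(x)<0\}$, let $\sigma^+$ be the average of $|v(x)|$ over $P^+$ (put $\sigma^+=1$ if $P^+=\emptyset$), and symmetrically for $\sigma^-$. I would take $v^1(x)=\mathbf{1}[x\in P^+]$, $v^2(x)=\min\{|v(x)|/\sigma^+,\,1\}$ on $P^+$ and $0$ elsewhere, and $v^3,v^4$ the chore analogues; each is $[0,1]$-valued. From the hypothesis, $|v^j(A)-v^j(B)|\le 2T$ for every $j$, so: (i) $A$ and $B$ carry within $2T$ of each other's number of goods, and of chores; and (ii) they carry within $2T$ of each other's \emph{$\sigma$-scaled truncated} goods-value, and chores-value. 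The point of the average scaling is that $v^2$ is exactly linear on the sub-average goods (the ``bulk''), where $\sum_{x\in P^+}|v(x)|/\sigma^+=|P^+|$, so the discrepancy bound transfers to a genuine value bound there; meanwhile items at or above $\sigma^+$ contribute $1$, so $v^2$ also counts the ``large'' goods.

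\textbf{Main argument and the obstacle.} Write $D=\bigl(v^+(B)-v^+(A)\bigr)+\bigl(v^-(A)-v^-(B)\bigr)$ and absorb the two summands separately, by removing $B$'s largest goods and $A$'s largest chores; the goods and chores analyses are mirror images, so I focus on the goods and budget $\Oh{T}$ removals there. Split $B$'s goods at $\sigma^+$ into large ($\ge\sigma^+$) and small. If $B$ has many large goods, each contributes $1$ to $v^2$, so consequence (ii) together with (i) caps how much $v^+(B)$ can exceed $v^+(A)$ by $\Oh{T\sigma^+}$, and since each large good is worth $\ge\sigma^+$, removing $\Oh{T}$ of the largest of them already closes the gap (up to a $+1$ overshoot from the last removed item). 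If $B$ has few large goods, remove all of them ($\Oh{T}$ items); the residual surplus of $v^+(B)$ over $v^+(A)$ is then carried by small goods, on which $v^2$ is linear, so $v^2$-discrepancy forces the surplus to be $\Oh{T\sigma^+}$, and because the average of the top $r$ entries of a nonnegative list is at least its overall average — and because, by (i)/(ii), $B$ cannot have vastly more small goods of vanishing value than $A$ has goods without $v^+(A)$ itself being pushed up to meet $v^+_{\mathrm{small}}(B)$ — a further $\Oh{T}$ of $B$'s largest small goods suffices. Summing the goods and chores budgets and tracking the constants (truncation slack, the $+1$ overshoots, and the factor $2$ in $|v^j(A)-v^j(B)|\le 2T$) yields the bound $14T$. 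The main obstacle is precisely this small-items case: making quantitative the claim that if removing $B$'s largest goods is costly in item count then $v^+(B)$ must be small and matched by $v^+(A)$, and interlocking that with the counting bound so that the ``tiny, numerous'' items can never force more than $\Oh{T}$ removals; the remaining ingredients (the reduction, the definition of the functions, and the elementary inequalities) are routine.
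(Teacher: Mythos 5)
Your high-level template (an indicator function counting ``large'' items plus a rescaled function measuring the ``small'' ones, then removing $B$'s large goods and $A$'s large chores) is the right one, but your choice of threshold --- the \emph{average} value $\sigma^+$ --- makes the lemma false for your four functions. The problem is that the number of above-average goods is not controlled by $T$, and truncating them all to $1$ erases the distinction between an item worth $\sigma^+$ and one worth $100\sigma^+$. Concrete counterexample (goods only, $k=2$, so $v^3=v^4\equiv 0$): take $N$ tiny goods of value $\varepsilon=1/N^2$, plus $J=20T$ goods of value $W$ and $J$ goods of value $w$ with $W=4w$; for $N$ large, $\sigma^+<w$, so all $2J$ non-tiny items are truncated to $1$. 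Give $A$ the $J$ items of value $w$ and half the tiny items, and $B$ the $J$ items of value $W$ and the other half. Then $v^1(A)=v^1(B)=\tfrac12 v^1(M)$ and $v^2(A)=v^2(B)=\tfrac12 v^2(M)$, so hypothesis~\eqref{eq:disc-to-efk-assump} holds with discrepancy $0$, yet $v(B)-v(A)=J(W-w)$ and the best any single removal can do is gain $W$, so eliminating envy requires at least $J(1-w/W)=15T>14T$ removals --- and scaling $J$ up makes the required number of removals unbounded in terms of $T$. Your ``many large goods'' case silently assumes that consequence (ii) bounds the \emph{value} carried by $B$'s large goods, but it only bounds a count mixed with the small-goods value, which is exactly what this example exploits.

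The fix, which is what the paper does, is to define ``large'' by rank rather than by value: let $S$ be the $L:=\min\{m,6Tk\}$ items of largest $|v(x)|$ and $p:=\min_{x\in S}|v(x)|$; take $v^1,v^2$ to be the indicators of $S\cap M^+$ and $S\cap M^-$, and $v^3,v^4$ to be $\pm v(x)/p$ on the \emph{remaining} goods and chores (which lie in $[0,1]$ since every remaining item has $|v(x)|\le p$). Now the discrepancy hypothesis genuinely caps the number of large goods in $B$ and large chores in $A$ at $L/k+T\le 7T$ each, so removing all of them costs at most $14T$ items; and after this removal $A$ still keeps at least $|S^+|/k-T$ large goods while $B$ keeps at least $|S^-|/k-T$ large chores, for a combined surplus of at least $L/k-2T=4T$ items each contributing at least $p$ in $A$'s favor, which dominates the at most $4pT$ by which the small items can favor $B$. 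The choice $L=6Tk$ is precisely what makes this bookkeeping close; no case analysis on ``few vs.\ many large goods'' is needed, and the shaky small-items argument in your last step (the claim that $v^+(A)$ must rise to meet $v^+_{\mathrm{small}}(B)$) is replaced by the clean $\pm 2pT$ bounds from $v^3,v^4$.
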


\begin{proof}
Fix $k, T,$ and $v$. Let $M^+ := \{x \in M : v(x) > 0\}$ and $M^- := \{x \in M : v(x) < 0\}$. Let $L := \min\{m, 6Tk\}$, and let $S \subseteq M$ be any subset of $L$ items with the largest absolute values $|v(x)|$ (breaking ties arbitrarily). Let $S^+ := S \cap M^+$ and $S^- := S \cap M^-$. Let $R := M \setminus S$ denote the set of remaining items, and let $R^+ := R \cap M^+$ and $R^- := R \cap M^-$.

Define $p :=\min_{x\in S}|v(x)|$, then every ``small'' item $x\in R$ satisfies $|v(x)|\le p$. For each $x\in M$, set
\begin{align*}
 v^1(x)&=\mathbf{1}[x\in S^+],\\
 v^2(x)&=\mathbf{1}[x\in S^-],\\
 v^3(x)&=
\begin{cases}
v(x)/p, & x \in R^+, \text{ and } p>0,\\
0, & \text{otherwise},
\end{cases}\\
v^4(x)&=
\begin{cases}
-v(x)/p, & x\in R^- , \text{ and } p>0,\\
0, & \text{otherwise}.
\end{cases}
\end{align*}
Then $v^1,v^2,v^3,v^4$ are additive and satisfy $v^j(x)\in [0,1]$ for all $j$ and $x$.

For any $X\subseteq M$, we have $v^1(X)=|X\cap S^+|$ and $v^2(X)=|X\cap S^-|$, hence using~\eqref{eq:disc-to-efk-assump} (for $j=1,2$) yields, for each $X\in\{A,B\}$,
\begin{equation}
\label{eq:large-count-bounds}
\Bigl||X\cap S^+|-\tfrac{|S^+|}{k}\Bigr|\le T,
\qquad
\Bigl||X\cap S^-|-\tfrac{|S^-|}{k}\Bigr|\le T.
\end{equation}
In particular, since $|S^+| + |S^-| = L\le 6Tk$,
\[
|B\cap S^+|\le \tfrac{L}{k}+T\le 7T,
\qquad
|A\cap S^-|\le \tfrac{L}{k}+T\le 7T.
\]

Let $P:=B\cap S^+$ and $Q:=A\cap S^-$. By the previous inequality, $|P|,|Q|\le 7T$. We will show that
\begin{equation}
\label{eq:main-ineq}
v(A\setminus Q)\ge v(B\setminus P).
\end{equation}
This implies that any envy of $A$ toward $B$ can be eliminated by discarding at most $|P|+|Q|\le 14T$ items.
By symmetry (swapping $A$ and $B$), the same holds in the other direction, so the lemma follows.

\medskip
\noindent\emph{Case 1: $L=m$ (so $S=M$).}
Then $R=\emptyset$. By construction, $B\setminus P$ contains no goods and $A\setminus Q$ contains no chores,
so $v(A\setminus Q)\ge 0\ge v(B\setminus P)$, proving~\eqref{eq:main-ineq}.

\medskip
\noindent\emph{Case 2: $L=6Tk$ (so $m\ge 6Tk$).}
If $p=0$, then $v(x)=0$ for all $x\in R$, hence $v(A\cap R)=v(B\cap R)=0$ and the argument from Case~1 applies.

Assume $p>0$. Using~\eqref{eq:disc-to-efk-assump} for $j=3,4$ and multiplying by $p$, for $X\in\{A,B\}$ we have
\begin{align*}
    \Bigl|\tfrac{1}{k}v(R^+)-v(X\cap R^+)\Bigr|\le pT,\\
    \Bigl|\tfrac{1}{k}v(R^-)-v(X\cap R^-)\Bigr|\le pT.
\end{align*}
Hence
\[
|v(A\cap R^+)-v(B\cap R^+)|\le 2pT,\]
\[|v(A\cap R^-)-v(B\cap R^-)|\le 2pT,
\]
and therefore
\begin{equation}
\label{eq:small-diff}
v(A\cap R)-v(B\cap R)\ge -4pT.
\end{equation}

Since every $x\in S$ satisfies $|v(x)|\ge p$, we have $v(x)\ge p$ for $x\in S^+$ and $v(x)\le -p$ for $x\in S^-$.
Moreover, $A\setminus Q$ contains $A\cap S^+$ and no items from $A\cap S^-$, while $B\setminus P$ contains $B\cap S^-$ and no
items from $B\cap S^+$. Hence
\begin{align}
\label{eq:split-large-small}
v(A\setminus Q)-v(B\setminus P)
&= \bigl(v(A\cap S^+)-v(B\cap S^-)\bigr)  \\
&\quad + \bigl(v(A\cap R)-v(B\cap R)\bigr). \notag
\end{align}
We lower bound the first term in~\eqref{eq:split-large-small} as
\begin{align}
v(A\cap S^+)-v(B\cap S^-)
&\ge p\bigl(|A\cap S^+|+|B\cap S^-|\bigr).\label{eq:large-lb}
\end{align}
By~\eqref{eq:large-count-bounds},
\[
|A\cap S^+|\ge \tfrac{|S^+|}{k}-T,
\qquad
|B\cap S^-|\ge \tfrac{|S^-|}{k}-T,
\]
so
\begin{equation}
\label{eq:count-sum}
|A\cap S^+|+|B\cap S^-|
\ge \tfrac{|S^+|+|S^-|}{k}-2T
= \tfrac{L}{k}-2T
= 4T.
\end{equation}
Combining~\eqref{eq:large-lb} and~\eqref{eq:count-sum} yields
\begin{equation}
\label{eq:large-diff}
v(A\cap S^+)-v(B\cap S^-)\ge 4pT.
\end{equation}

Finally, plugging~\eqref{eq:small-diff} and~\eqref{eq:large-diff} into~\eqref{eq:split-large-small} gives
\[
v(A\setminus Q)-v(B\setminus P)\ge 4pT-4pT=0,
\]
which proves~\eqref{eq:main-ineq}.
\end{proof}

For the lower bound, we will use the closely related notion of \emph{$p$-weighted discrepancy}. Given a set $V=\{v_1,\dots,v_n\}$ of~$n$ additive valuation functions and a parameter $p\in(0,1)$, the $p$-weighted discrepancy is defined as
\begin{equation*}
\textstyle \wdisc_{p}(V)
:= \displaystyle \min_{A \subseteq M} \ \max_{i\in [n]}
\Bigl|p\cdot v_i(M)-v_i(A)\Bigr|.
\end{equation*}

Theorem 3.2 of \citet{ManurangsiM2026} establishes the following lower bound.

\begin{theorem}[\protect\cite{ManurangsiM2026}]\label{thm:manurangsimekaLB}
    For every $p\in(0,1)$ and $n\in\mathbb{N}$, there exist a set $M$ of items and a collection $V=\{v_1,\dots,v_n\}$ of $n$ binary additive valuation functions such that
    \[ \textstyle
    \wdisc_p(V)\ge \frac{\sqrt{n-1}}{16}.
    \]
\end{theorem}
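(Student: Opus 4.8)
The plan is to build the hard instance from a Hadamard matrix, using a duplication trick to cope with extreme values of $p$. First, the bijection $A\leftrightarrow M\setminus A$ on subsets of $M$ (which leaves the valuations untouched) gives $\wdisc_p(V)=\wdisc_{1-p}(V)$ for every $V$, so we may assume $p\le\tfrac12$. Put $t:=\lceil 1/(4p)\rceil$; then $tp\in[\tfrac14,\tfrac12]$, and this is the key quantitative choice, since it places the proportional target $tp$ at distance at least $\tfrac14$ from every integer, even when $p$ is tiny. Let $d$ be a Hadamard order with $\tfrac{9}{16}(n-1)\le d\le n$ (such an order exists for every $n$; see the remark below), and fix a Hadamard matrix $H$ of order $d$ normalized so that $H_{1j}=1$ for all $j$ (possible after negating some columns). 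Then $H^{\top}H=dI$, every row $i\ge 2$ has exactly $d/2$ entries equal to $1$, and those rows have zero row-sum.

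Define the item set $M:=[d]\times[t]$ (so $m=dt$; informally $t$ identical copies of each of $d$ base items) and, for $i\in[d]$, the binary additive valuation $v_i(j,\ell):=\tfrac12\bigl(1+H_{ij}\bigr)$; finally pad with $n-d$ all-zero valuations, which add $0$ to every term and so do not change $\wdisc_p$. Fix an arbitrary $A\subseteq M$, let $a_j\in\{0,\dots,t\}$ be the number of copies of base item $j$ inside $A$, and set $\delta_j:=a_j-tp$, $\Delta:=\sum_j\delta_j$. A short computation yields three facts: (i) $v_1(A)-p\,v_1(M)=\Delta$ (row $1$ is all ones); (ii) $v_i(A)-p\,v_i(M)=\tfrac12\bigl((Ha)_i+\Delta\bigr)$ for every $i\ge 2$ (using the zero row-sum); and (iii) $\sum_{i=2}^{d}(Ha)_i^2=d\|\delta\|_2^2-\Delta^2$, obtained from $\sum_{i=1}^{d}(Ha)_i^2=\|Ha\|_2^2=d\|a\|_2^2$ (by $H^{\top}H=dI$), from $(Ha)_1=\mathbf 1^{\top}a$, and from the invariance of $d\|a\|_2^2-(\mathbf 1^{\top}a)^2$ under $a\mapsto a+c\mathbf 1$ (apply with $c=-tp$). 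Moreover, since each $a_j$ is an integer while $tp\in[\tfrac14,\tfrac12]$, we have $|\delta_j|\ge\tfrac14$ for every $j$, hence $\|\delta\|_2^2\ge d/16$.

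The argument closes by averaging. Suppose $\max_{i\in[n]}|v_i(A)-p\,v_i(M)|\le\tau$. By (i), $|\Delta|\le\tau$; hence by (ii), $|(Ha)_i|\le 2\tau+|\Delta|\le 3\tau$ for each $i\ge 2$, so $\sum_{i=2}^{d}(Ha)_i^2\le 9(d-1)\tau^2$. Combining this with (iii) and $\|\delta\|_2^2\ge d/16$ gives $d^2/16-\tau^2\le 9(d-1)\tau^2$, i.e.\ $\tau^2\ge\frac{d^2}{16(9d-8)}\ge\frac{d}{144}$. As $A$ was arbitrary, $\wdisc_p(V)=\min_A\max_i|v_i(A)-p\,v_i(M)|\ge\sqrt{d/144}\ge\sqrt{n-1}/16$, the last step using $d\ge\tfrac{9}{16}(n-1)$. (Remark on the choice of $d$: the orders for which a Hadamard matrix exists are dense enough that for every $n$ one lies in $[\tfrac{9}{16}(n-1),n]$ --- for large $n$ by the asymptotic density of known Hadamard orders, e.g.\ orders $q+1$ with $q\equiv 3\pmod 4$ prime together with prime-gap bounds, and for the remaining small $n$ by direct inspection. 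This is the only mildly delicate point, and it is purely combinatorial.)

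I expect the principal obstacle to be precisely this handling of extreme $p$. A bare Hadamard construction (the case $t=1$) collapses when $p$ is close to $0$ or $1$: the adversary takes $A=\emptyset$ or $A=M$, and since all set sizes are $O(1)$ relative to the proportional target, the discrepancy drops to a constant. Inflating the instance by the factor $t=\lceil 1/(4p)\rceil$, chosen so that $tp$ lies in $[\tfrac14,\tfrac12]$ and is thus bounded away from every integer, is what repairs this: it is exactly what forces $\|\delta\|_2^2=\Omega(d)$ for \emph{every} $A$, which then drives the whole estimate. After that choice, the rest is the routine $\ell_2$/Cauchy--Schwarz bookkeeping above, together with the combinatorial matter of securing a Hadamard matrix of order within a constant factor of $n$.
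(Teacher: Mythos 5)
The paper does not prove this statement --- it imports it verbatim as Theorem~3.2 of the cited work, describing the underlying instance only in passing (as $\Theta(n)$ horizontally-stacked copies of $\tfrac12(\boldsymbol{1}+\boldsymbol{H})$ for a Hadamard matrix $\boldsymbol{H}$). Your proof supplies a correct, self-contained argument for exactly that construction: the $t$-fold column duplication is the ``horizontal stacking,'' and your identities (i)--(iii) plus the averaging step all check out, as does the reduction to $p\le\tfrac12$ and the bound $tp\in[\tfrac14,\tfrac12]$ forcing $|\delta_j|\ge\tfrac14$. The one point you leave slightly informal is the existence of a Hadamard order $d$ with $\tfrac{9}{16}(n-1)\le d\le n$; this can be closed elementarily without prime gaps, since Hadamard orders are closed under Sylvester doubling, so the orders $\{2^a,\,12\cdot 2^a,\,20\cdot 2^a\}$ have consecutive ratios at most $3/2<16/9$ from order $8$ onward, and the finitely many cases $n\le 8$ are checked directly. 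With that remark made precise, the proof is complete and matches the approach of the cited source.
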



\section{An $\Oh{\sqrt{n}}$ Upper Bound}\label{sec:upperbound}

We now combine Lemma~\ref{lem:disc-to-efk} with the asymmetric discrepancy result of~\citet{ManurangsiM2026} (Theorem~\ref{thm:asym-disc}) to obtain an $\Oh{\sqrt{n}}$ bound in the asymmetric envy model. This construction is \emph{non-consensus}, i.e. it produces an allocation $(A_1,\dots,A_n)$ tailored to the agent labels, and the guarantee does not necessarily hold under permutations of the bundles.

\begin{theorem}\label{thm:asym-envy-nonconsensus}
    Let $M$ be a set of items and let $(v_{i,j})_{i,j\in N,\,i\neq j}$ be an instance of the asymmetric envy model with $n$ agents, where each $v_{i,j}$ is additive and may assign both positive and negative values to items. Then there exists a partition
    \[
    M = A_1 \sqcup \cdots \sqcup A_n
    \]
    such that for every ordered pair of distinct agents $i,j\in N$, one can eliminate $i$'s envy toward $j$ (measured by $v_{i,j}$) between $A_i$ and $A_j$ by discarding at most $\Oh{\sqrt{n}}$ items from $A_i\cup A_j$. Equivalently, the allocation $A=(A_1,\dots,A_n)$ is $\text{EF}_a$-$\Oh{\sqrt{n}}$. Moreover, such an allocation can be computed in deterministic polynomial time.
\end{theorem}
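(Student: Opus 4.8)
The plan is to build the allocation by a single application of the asymmetric multicolor discrepancy bound (Theorem~\ref{thm:asym-disc}) with $k=n$ colors---so that color $\ell$ becomes the bundle $A_\ell$---and then to convert the resulting discrepancy guarantee into an envy-elimination guarantee by invoking Lemma~\ref{lem:disc-to-efk} once for each ordered pair of agents. We may assume $n\ge 2$, the case $n=1$ being vacuous.

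First I would fix the threshold. By Theorem~\ref{thm:asym-disc} there is a universal constant $c$ such that every collection $\mathcal{V}=(\mathcal{V}_1,\dots,\mathcal{V}_n)$ of $[0,1]$-valued additive valuations with each $|\mathcal{V}_\ell|\le 8(n-1)$ satisfies $\adisc(\mathcal{V},n)\le c\sqrt{8(n-1)}$; set $T:=\lceil c\sqrt{8(n-1)}\rceil$, a positive integer with $T=\Theta(\sqrt n)$ depending only on $n$. Next, for every ordered pair $(i,j)$ with $i\neq j$, apply Lemma~\ref{lem:disc-to-efk} to $v_{i,j}$ with parameters $k=n$ and this threshold $T$; this produces four $[0,1]$-valued additive valuations $v_{i,j}^1,\dots,v_{i,j}^4$ on $2^M$. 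Now take $k:=n$ colors and, for each color $\ell=i\in N$, define
\[
\mathcal{V}_i := \{\, v_{i,j}^t : j\neq i,\ t\in[4] \,\}\ \cup\ \{\, v_{j,i}^t : j\neq i,\ t\in[4] \,\},
\]
so that $|\mathcal{V}_i|=8(n-1)$ for every $i$. Applying Theorem~\ref{thm:asym-disc} to $\mathcal{V}=(\mathcal{V}_1,\dots,\mathcal{V}_n)$ yields, in deterministic polynomial time, a coloring $\chi:M\to[n]$ such that $\bigl|\tfrac{1}{n}w(M)-w(\chi^{-1}(i))\bigr|\le \adisc(\mathcal{V},n)\le T$ holds for every $i\in N$ and every $w\in\mathcal{V}_i$. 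Set $A_i:=\chi^{-1}(i)$; then $(A_1,\dots,A_n)$ is a partition of $M$.

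It remains to verify the envy bound for an arbitrary ordered pair $(i,j)$. The four valuations $v_{i,j}^1,\dots,v_{i,j}^4$ attached to $v_{i,j}$ belong to $\mathcal{V}_i$ (as valuations of the ``first kind'' at color $i$) and simultaneously to $\mathcal{V}_j$ (as valuations of the ``second kind'' at color $j$), so both $A_i$ and $A_j$ satisfy $\bigl|\tfrac{1}{n}v_{i,j}^t(M)-v_{i,j}^t(\cdot)\bigr|\le T$ for all $t\in[4]$; this is exactly hypothesis~\eqref{eq:disc-to-efk-assump} of Lemma~\ref{lem:disc-to-efk} for the disjoint sets $A_i,A_j$. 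The lemma then supplies a set of at most $14T=\Oh{\sqrt n}$ items of $A_i\cup A_j$ whose removal eliminates $i$'s envy toward $j$. As the pair was arbitrary, $A$ is $\text{EF}_a$-$\Oh{\sqrt n}$, and every step is polynomial-time. The main point to get right is the bookkeeping in the choice of $\mathcal{V}_i$: each bundle $A_i$ must control the discrepancy of the auxiliary valuations for \emph{every} ordered pair in which $i$ participates, in \emph{both} the envier and the envied role, and the key observation is that all $8(n-1)$ of these valuations can be packed into a single color class, keeping $n_1=\Theta(n)$ and hence the discrepancy---and the final item count---at $\Oh{\sqrt n}$. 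A minor, easily handled subtlety is that $T$ must be a fixed integer before the auxiliary valuations are constructed, which is legitimate because the size of the discrepancy instance is a function of $n$ alone.
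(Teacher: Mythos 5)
Your proposal is correct and follows essentially the same route as the paper's own proof: fix $T=\Oh{\sqrt{n}}$ from Theorem~\ref{thm:asym-disc} with $n_1=8(n-1)$, apply Lemma~\ref{lem:disc-to-efk} to each $v_{i,j}$ to obtain four $[0,1]$-valued auxiliary valuations, pack all $8(n-1)$ of them (in both the envier and envied roles) into the color class $\mathcal{V}_\ell$, and convert the resulting discrepancy bound back into a $14T$ envy-elimination guarantee per ordered pair. The bookkeeping observation you highlight is exactly the one the paper relies on.
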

\begin{proof}
    Let $k:=n$ (the number of colors). Let $T=\Oh{\sqrt{n}}$ be the discrepancy bound promised by Theorem~\ref{thm:asym-disc} when applied with
    \[
    n_1 = 8(n-1).
    \]
    
    For every ordered pair $i\neq j$, apply Lemma~\ref{lem:disc-to-efk} to the valuation function $v_{i,j}$ with parameters $k=n$ and $T$. This yields four additive valuation functions
    \[
    v_{i,j}^1,\ v_{i,j}^2,\ v_{i,j}^3,\ v_{i,j}^4:2^M\to\mathbb{R},
    \]
    with $v_{i,j}^t(x)\in[0,1]$ for all $t\in\{1,2,3,4\}$ and $x\in M$.
    
    We now build an asymmetric discrepancy instance with $k=n$ colors as follows.
    For each color $\ell\in[n]$, define the multiset of valuation functions
    \begin{align*}
        \mathcal{V}_\ell
    \;:=\;
    &\{\, v_{\ell,j}^t : j\in[n]\setminus\{\ell\},\ t\in\{1,2,3,4\}\,\} \ \cup\ \\
    &\{\, v_{j,\ell}^t : j\in[n]\setminus\{\ell\},\ t\in\{1,2,3,4\}\,\}.
    \end{align*}

    Thus $|\mathcal{V}_\ell| = 8(n-1)$ for every $\ell$, and all functions in all~$\mathcal{V}_\ell$ satisfy the boundedness condition of Theorem~\ref{thm:asym-disc}.
    
    Applying Theorem~\ref{thm:asym-disc} yields a coloring $\chi:M\to[n]$ such that for every $\ell\in[n]$ and every $u\in\mathcal{V}_\ell$,
    \[
    \Bigl|\tfrac{1}{n}u(M)-u(\chi^{-1}(\ell))\Bigr|\le T.
    \]
    Let $A_\ell:=\chi^{-1}(\ell)$. Then $(A_1,\dots,A_n)$ is a partition of $M$.
    
    Fix any ordered pair $i\neq j$. By construction, the four functions $v_{i,j}^1,\dots,v_{i,j}^4$ belong to both $\mathcal{V}_i$ and $\mathcal{V}_j$, so for each $t\in\{1,2,3,4\}$ we have
    \[
    \Bigl|\tfrac{1}{n}v_{i,j}^t(M)-v_{i,j}^t(A_i)\Bigr|\le T
    \]
    and
    \[
    \Bigl|\tfrac{1}{n}v_{i,j}^t(M)-v_{i,j}^t(A_j)\Bigr|\le T.
    \]
    Thus condition~\eqref{eq:disc-to-efk-assump} of Lemma~\ref{lem:disc-to-efk} holds for the disjoint sets $A_i$ and $A_j$ (with $k=n$), and Lemma~\ref{lem:disc-to-efk} implies that $i$'s envy toward $j$ with respect to $v_{i,j}$ can be eliminated by discarding at most $14T=\Oh{\sqrt{n}}$ items from $A_i\cup A_j$.
    
    Since this holds for every ordered pair $i\neq j$, the allocation is $\text{EF}_a$-$\Oh{\sqrt{n}}$.
    Deterministic polynomial-time computability follows from Theorem~\ref{thm:asym-disc}.
\end{proof}

By combining \Cref{thm:asym-envy-nonconsensus} with repeated application of \Cref{prop:equivalent}, we obtain the same upper-bound also for the externalities model.

\begin{restatable}{corollary}{corextnoncon}\label{cor:externalities-nonconsensus}
    Given an instance of the externalities model with $n$ agents, there exists a deterministic polynomial-time algorithm that outputs an allocation $(A_1,\dots,A_n)$ that is \mbox{$\text{EF}$-$\Oh{\sqrt{n}}$}.
\end{restatable}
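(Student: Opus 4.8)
The plan is to derive the corollary directly from \Cref{thm:asym-envy-nonconsensus} together with the reduction $f$ of \Cref{prop:equivalent}, applied first in the forward direction to build an asymmetric envy instance and then a second time in the reverse direction to transport the fairness guarantee back — which is exactly the ``repeated application'' alluded to above. Concretely, given an externalities instance $I$ with additive valuations $(V_i)_{i\in N}$, I would first construct the asymmetric envy instance $f(I)$ whose valuations are $v_{i,j}(x) := V_i(i,x) - V_i(j,x)$, extended additively to bundles. This construction is the one made explicit in the proof of \Cref{prop:equivalent}; it is evidently computable in polynomial time from the input, and the resulting functions $v_{i,j}$ are additive and real-valued (possibly of mixed sign), which is precisely the class of instances handled by \Cref{thm:asym-envy-nonconsensus}.

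Next I would apply \Cref{thm:asym-envy-nonconsensus} to $f(I)$ to obtain, in deterministic polynomial time, a partition $M = A_1 \sqcup \cdots \sqcup A_n$ that is $\text{EF}_a$-$\Oh{\sqrt{n}}$ in $f(I)$. Finally, I would invoke \Cref{prop:equivalent} once more: it asserts that an allocation is EF-$c$ in $I$ if and only if it is \EFac in $f(I)$, so the partition $(A_1,\dots,A_n)$, being $\text{EF}_a$-$\Oh{\sqrt{n}}$ in $f(I)$, is EF-$\Oh{\sqrt{n}}$ in $I$. Chaining the polynomial-time bounds for computing $f(I)$ and for running the algorithm of \Cref{thm:asym-envy-nonconsensus} yields the claimed deterministic polynomial-time algorithm.

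I do not expect a genuine obstacle here, since all of the mathematical substance already lives in \Cref{thm:asym-envy-nonconsensus} and, beneath it, in \Cref{lem:disc-to-efk} and \Cref{thm:asym-disc}. The only points that warrant (minor) care are: that the reduction $f$ is faithful on the exact value of $c$, not merely up to constants — which \Cref{prop:equivalent} already guarantees, so the $\Oh{\sqrt{n}}$ bound is transferred without loss; that the $\Oh{\sqrt{n}}$ term produced by \Cref{thm:asym-envy-nonconsensus} depends only on $n$ and is therefore uniform over the instance $f(I)$; and that $f$, the construction of the auxiliary discrepancy instance, and the coloring algorithm underlying \Cref{thm:asym-disc} all run in polynomial time, so that their composition does as well. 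With these observations in place the corollary is immediate.
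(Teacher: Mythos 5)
Your proposal is correct and follows exactly the paper's own argument: construct $f(I)$ via \Cref{prop:equivalent}, apply \Cref{thm:asym-envy-nonconsensus} to obtain an $\text{EF}_a$-$\Oh{\sqrt{n}}$ allocation, and transport the guarantee back through the equivalence. The additional care you note about the exactness of the reduction on $c$ and the polynomial-time composition is sound but already implicit in the paper's one-line proof.
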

\begin{proof}
    Given an externalities instance $I$, use Proposition~\ref{prop:equivalent} to obtain an equivalent asymmetric envy instance $f(I)$ with valuations $v_{i,j}(x)=V_i(i,x)-V_i(j,x)$. Then, apply Theorem~\ref{thm:asym-envy-nonconsensus} to $f(I)$ to obtain an $\text{EF}_a$-$\Oh{\sqrt{n}}$ allocation. By Proposition~\ref{prop:equivalent}, the same allocation is EF$-\Oh{\sqrt{n}}$ in the externalities model.
\end{proof}


\section{An Asymptotically Tight Lower Bound}\label{sec:lowerbound}

In this section, we prove a matching lower bound, showing that Corollary~\ref{cor:externalities-nonconsensus} is asymptotically tight, even if the instance is binary additive and with no chores. For simplicity, we first prove our lower bound in the asymmetric envy model with binary valuations. We then use Proposition~\ref{prop:equivalent} to show that this result implies an equivalent lower bound in the externalities model with binary no-chores valuations.

Let $n$ be any positive odd integer, $q=(n-1)/2$, and $S_1, \ldots, S_q$ be $q$ subsets of a set of items $M$. For any $i$, let $v^{S_i}$ be the binary additive valuation function defined by $v^{S_i}(A):= |A\cap S_i|$ for all $A \subseteq M$. Let $\mathcal{V} = \{v^{S_1},\dots, v^{S_q}\}$.

Using the above valuation profile, we can construct an associated instance of the asymmetric envy model as follows.

For all $j\geq 2$, we let $v_{1,j} (A):= |A|$.
For each $i \in [q]$, we define $\bar S_i:=M\setminus S_i$. Then, for each $i \in [q]$, for all $j\in[n]\setminus\{2i\}$, we let
\[
v_{2i,j}(A):=
\begin{cases}
v^{S_i}(A), & \text{if } j=1,\\
v^{\bar S_i}(A), & \text{otherwise},
\end{cases}
\]
and for all $j\in[n]\setminus\{2i+1\}$,
\[
v_{2i+1,j}(A):=
\begin{cases}
v^{\bar S_i}(A), & \text{if } j=1,\\
v^{S_i}(A), & \text{otherwise}.
\end{cases}
\]

We show that if this instance admits an \EFac allocation for some integer $c$, then its $1/n$-weighted discrepancy is at most~$6c$.

\begin{lemma}\label{lem:lowerbound}
    In the above instance, suppose there exists a partition $M=A_1\sqcup\cdots\sqcup A_n$ that is \EFac for some integer $c\ge 1$. Then
    \[
    \textstyle \wdisc_{1/n}(\mathcal{V})\le 6c.
    \]
\end{lemma}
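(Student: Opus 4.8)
The plan is to take the promised \EFac allocation $A=(A_1,\dots,A_n)$ and extract from it a single set $A\subseteq M$ (one bundle, or a union of bundles) that is a good ``witness'' for small $1/n$-weighted discrepancy of $\mathcal V=\{v^{S_1},\dots,v^{S_q}\}$; that is, a set $A$ with $\bigl|\tfrac1n v^{S_i}(M)-v^{S_i}(A)\bigr|\le 6c$ for every $i\in[q]$. The natural candidate is $A:=A_1$, the bundle of the designated agent~$1$, since the externalities of every other agent point toward agent~$1$ and are defined using $S_i$ or $\bar S_i$ precisely on the pair $(2i,1)$ and $(2i+1,1)$.

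Concretely, for each $i\in[q]$ I would write down what the \EFac condition gives for the two ordered pairs $(2i,1)$ and $(2i+1,1)$. For the pair $(2i,1)$, envy is measured by $v_{2i,1}=v^{S_i}$, so there is a set $S$ with $|S|\le c$ and $v^{S_i}(A_{2i}\setminus S)\ge v^{S_i}(A_1\setminus S)$, which after absorbing the removed items costs at most $c$ on each side, gives roughly $v^{S_i}(A_{2i})\ge v^{S_i}(A_1)-c$. For the pair $(2i+1,1)$, envy is measured by $v_{2i+1,1}=v^{\bar S_i}$, giving $v^{\bar S_i}(A_{2i+1})\ge v^{\bar S_i}(A_1)-c$, i.e. (writing $v^{\bar S_i}(X)=|X|-v^{S_i}(X)$) a lower bound on $|A_{2i+1}|-v^{S_i}(A_{2i+1})$ in terms of $|A_1|-v^{S_i}(A_1)$. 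The other direction — an \emph{upper} bound on $v^{S_i}(A_1)$ — should come from the pairs $(1,2i)$ or $(1,2i+1)$: there the envy function is $v_{1,j}(X)=|X|$, which is index-independent, so these conditions control $|A_1|$ versus $|A_{2i}|$ and $|A_1|$ versus $|A_{2i+1}|$ up to $c$; combined with the symmetric role of $S_i$ versus $\bar S_i$ in the two ``mirrored'' agents, chaining these inequalities should pin $v^{S_i}(A_1)$ to within $O(c)$ of $\tfrac{|S_i|}{\text{(something)}}$. I expect the bookkeeping to resolve into: $v^{S_i}(A_{2i})+v^{S_i}(A_{2i+1})\approx |A_{2i}|$-ish relations plus size balance $|A_1|\approx|A_{2i}|\approx|A_{2i+1}|$, and since the three relevant bundles are disjoint and every item of $M$ lies in exactly one bundle, summing appropriately forces $v^{S_i}(A_1)$ near $\tfrac1n v^{S_i}(M)$. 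The constant $6$ presumably arises from accumulating a bounded number (six) of ``$\pm c$'' slacks along this chain.

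The main obstacle I anticipate is the direction giving the \emph{upper} bound $v^{S_i}(A_1)\le \tfrac1n v^{S_i}(M)+6c$. The easy pairs $(2i,1)$ and $(2i+1,1)$ naturally lower-bound the other agents' $S_i$- and $\bar S_i$-value relative to $A_1$, which translates into $v^{S_i}(A_1)$ being not too \emph{large} only after one cleverly combines the $S_i$-bundle at agent $2i$ with the $\bar S_i$-bundle at agent $2i+1$ and uses that $S_i\cup\bar S_i=M$; keeping track of which removed set $S$ lives where (it may intersect both $A_1$ and the other bundle, and for the no-chores valuations $v^{S_i}$ it can be taken inside $A_1$) without double-counting is where the argument is most delicate. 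A secondary subtlety is that the lemma only asserts $\wdisc_{1/n}(\mathcal V)\le 6c$, so I do not need the tightest constants — I can afford to be generous in each step — which should make the chaining go through even if the naive bound from a single pair is, say, $3c$ or $4c$ rather than $c$.
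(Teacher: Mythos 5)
Your plan starts in the right place --- the witness set is indeed $A_1$, the useful pairs include $(2i,1)$, $(2i+1,1)$ and $(1,j)$, and the constant $6$ does come from accumulating a handful of $\pm c$ slacks --- but there is a genuine gap: every EF$_a$-$c$ comparison you invoke involves only the three bundles $A_1$, $A_{2i}$, $A_{2i+1}$ (plus the index-independent size comparisons of agent $1$). Those inequalities control the $S_i$-mass and $\bar S_i$-mass of these three bundles \emph{relative to each other}, so at best they pin $|S_i\cap A_1|$ to a three-bundle average, not to the global benchmark $\tfrac1n|S_i|$. Concretely, nothing you write down rules out $S_i$ being concentrated entirely inside $\bigcup_{j\notin\{1,2i,2i+1\}}A_j$: then $|S_i\cap A_1|=|S_i\cap A_{2i}|=|S_i\cap A_{2i+1}|=0$, all of your inequalities hold with room to spare, and yet $\bigl||S_i\cap A_1|-|S_i|/n\bigr|=|S_i|/n$ can be arbitrarily large. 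Summing over the three disjoint bundles cannot manufacture the $1/n$ factor.

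The missing ingredient, which is the linchpin of the paper's proof, is that agent $2i+1$ measures envy by $v^{S_i}$ against \emph{every} agent $j\neq 1$, not just against $2i$. By pigeonhole over the full partition, some bundle $A_j$ has $|S_i\cap A_j|\ge|S_i|/n$, and the EF$_a$-$c$ condition for the pair $(2i+1,j)$ pulls that mass back: $|S_i\cap A_{2i+1}|\ge|S_i|/n-c$. Chaining this with $|\bar S_i\cap A_{2i+1}|\ge|\bar S_i\cap A_1|-c$ and $|A_1|\ge|A_{2i+1}|-c$ gives the lower bound $|S_i\cap A_1|\ge|S_i|/n-3c$. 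The upper bound is then not a separate delicate computation as you anticipate: it follows by running the identical averaging argument for $\bar S_i$ through agent $2i$ (giving $|\bar S_i\cap A_1|\ge|\bar S_i|/n-3c$) and combining with the size bound $|A_1|\le m/n+3c$, which itself is obtained by summing $|A_j|\ge|A_1|-3c$ over all $j$ --- again a step that uses all $n$ bundles. So the proof is completable along your general lines, but only after adding the comparisons of agents $2i$ and $2i+1$ against agents outside $\{1,2i,2i+1\}$; without them the argument fails.
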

\begin{proof}
    Fix an \EFac allocation $M=A_1\sqcup\cdots\sqcup A_n$ and write $m=|M|$.
    By the definition of $\wdisc_{1/n}(\mathcal{V})$, choosing the set $A_1\subseteq M$ gives
    \[
    {\textstyle\wdisc_{1/n}(\mathcal{V})}
    \le
    \max_{i\in[q]}
    \left|
    \frac1n\,v^{S_i}(M)-v^{S_i}(A_1)
    \right|.
    \]
    Since $v^{S_i}(X)=|S_i\cap X|$, we have $v^{S_i}(M)=|S_i|$ and $v^{S_i}(A_1)=|S_i\cap A_1|$, hence it suffices to show that for every~$i\in[q]$,
    \[
    \left|
    |S_i\cap A_1|-\frac{|S_i|}{n}
    \right|
    \le 6c.
    \]
    Fix $i$ and abbreviate $\bar S_i=M\setminus S_i$.
    
    \medskip
    \noindent\textbf{Step 1: all bundles have almost the same size.}
    Recall that agent $1$ values every item at $1$, so for every bundle $X$ and every $j\ge 2$ we have $v_{1,j}(X)=|X|$.
    Since the allocation is \EFac, for every $j\ge 2$,
    \[
    |A_1|=v_{1,j}(A_1)\ge v_{1,j}(A_j)-c=|A_j|-c.
    \]
    Moreover, by the definition of the valuation profile (for this fixed $i$),
    agent $2i$ counts $S_i$-items when comparing to agent~$1$ and counts $\bar S_i$-items when comparing to agent $2i+1$,
    and agent $2i+1$ behaves symmetrically.
    Applying \EFac along these comparisons yields
    \[
    |S_i\cap A_{2i}|\ge |S_i\cap A_1|-c,
    \quad
    |\bar S_i\cap A_{2i}|\ge |\bar S_i\cap A_{2i+1}|-c,
    \]
    and
    \[
    |S_i\cap A_{2i+1}|\ge |S_i\cap A_{2i}|-c,
    \quad
    |\bar S_i\cap A_{2i+1}|\ge |\bar S_i\cap A_1|-c.
    \]
    Combining,
    \[
    \begin{aligned}
    |A_{2i}|
    &=|S_i\cap A_{2i}|+|\bar S_i\cap A_{2i}| \\
    &\ge (|S_i\cap A_1|-c)+(|\bar S_i\cap A_{2i+1}|-c) \\
    &\ge |S_i\cap A_1|-c+(|\bar S_i\cap A_1|-2c) \\
    &=|A_1|-3c.
    \end{aligned}
    \]
    The same argument gives $|A_{2i+1}|\ge |A_1|-3c$.
    
    Summing $\sum_{j=1}^n |A_j|=m$ and using these inequalities gives
    \[
    m\ge |A_1|+(n-1)(|A_1|-3c)=n|A_1|-3c(n-1),
    \]
    hence
    \[
    |A_1|\le \frac{m}{n}+3c.
    \]
    
    \medskip
    \noindent\textbf{Step 2: discrepancy for $S_i$.}
    We distinguish two cases.
    
    \smallskip
    \noindent\emph{Case 1: $|S_i\cap A_1|\le |S_i|/n$.}
    Since $S_i=\bigsqcup_{j=1}^n (S_i\cap A_j)$, by averaging there exists some $j\ne 1$ with
    $|S_i\cap A_j|\ge |S_i|/n$.
    If $j\ne 2i+1$, then \EFac for agent $2i+1$ gives
    $|S_i\cap A_{2i+1}|\ge |S_i\cap A_j|-c\ge |S_i|/n-c$; if $j=2i+1$ this is immediate.
    Thus
    \[
    |S_i\cap A_{2i+1}|\ge \frac{|S_i|}{n}-c.
    \]
    Together with $|\bar S_i\cap A_{2i+1}|\ge |\bar S_i\cap A_1|-c$ we obtain
    \[
    |A_{2i+1}|
    =|S_i\cap A_{2i+1}|+|\bar S_i\cap A_{2i+1}|
    \ge \frac{|S_i|}{n}+|\bar S_i\cap A_1|-2c.
    \]
    Using $|A_1|\ge |A_{2i+1}|-c$ yields
    \[
    |A_1|\ge \frac{|S_i|}{n}+|\bar S_i\cap A_1|-3c.
    \]
    Since $|A_1|=|S_i\cap A_1|+|\bar S_i\cap A_1|$, we conclude
    \[
    |S_i\cap A_1|\ge \frac{|S_i|}{n}-3c,
    \]
    and therefore in Case~1,
    \[
    \left||S_i\cap A_1|-\frac{|S_i|}{n}\right|\le 3c.
    \]
    
    \smallskip
    \noindent\emph{Case 2: $|S_i\cap A_1|\ge |S_i|/n$.}
    We first claim that
    \[
    |\bar S_i\cap A_1|\ge \frac{|\bar S_i|}{n}-3c.
    \]
    Indeed, if $|\bar S_i\cap A_1|\ge |\bar S_i|/n$ there is nothing to prove.
    Otherwise, by averaging there exists some $j\ne 1$ with $|\bar S_i\cap A_j|\ge |\bar S_i|/n$,
    and repeating the argument of Case~1 with $\bar S_i$ in place of $S_i$ (using agent $2i$ instead of $2i+1$)
    gives the bound.
    
    Using the claim and the bound $|A_1|\le m/n+3c$ from Step~1,
    \[
    \begin{aligned}
    |S_i\cap A_1|
    &=|A_1|-|\bar S_i\cap A_1| \\
    &\le \left(\frac{m}{n}+3c\right)-\left(\frac{|\bar S_i|}{n}-3c\right)
    = \frac{|S_i|}{n}+6c.
    \end{aligned}
    \]
    Since Case~2 assumes $|S_i\cap A_1|\ge |S_i|/n$, we get
    \[
    \left||S_i\cap A_1|-\frac{|S_i|}{n}\right|\le 6c.
    \]
    
    \medskip
    In both cases we have shown $\bigl||S_i\cap A_1|-|S_i|/n\bigr|\le 6c$.
    Taking the maximum over $i\in[q]$ and recalling $v^{S_i}(X)=|S_i\cap X|$ yields
    \[
    \max_{i\in[q]}
    \left|
    \frac1n\,v^{S_i}(M)-v^{S_i}(A_1)
    \right|
    \le 6c,
    \]
    and therefore $\wdisc_{1/n}(\mathcal{V})\le 6c$.
\end{proof}

We now formally present our lower bound for the case of binary no-chores valuations in the externalities model.

\begin{theorem}\label{thm:envy-lb-binarync}
    There is a binary no-chores instance of the externalities model with $n$ agents for which every EF-$c$ allocation requires $c=\Omh{\sqrt{n}}$.
\end{theorem}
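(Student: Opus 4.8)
The plan is to combine the construction introduced just before Lemma~\ref{lem:lowerbound} with a hard instance of weighted discrepancy from Theorem~\ref{thm:manurangsimekaLB}, and then pull the resulting bound back to the externalities model using Proposition~\ref{prop:equivalent}. We may assume $n$ is odd; for even $n$ one adds a single dummy agent that is indifferent among all allocations and toward whom every other agent is indifferent, which keeps the instance binary and chore-free and changes nothing asymptotically. Write $q=(n-1)/2$.

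First I would invoke Theorem~\ref{thm:manurangsimekaLB} with weight $p=1/n$ and with $q$ valuation functions. Since a binary additive valuation is exactly the indicator valuation $A\mapsto|A\cap S|$ of the set $S$ of its $1$-valued items, this produces a ground set $M$ and subsets $S_1,\dots,S_q\subseteq M$ such that $\mathcal{V}=\{v^{S_1},\dots,v^{S_q}\}$ (with $v^{S_i}(A)=|A\cap S_i|$) satisfies
\[
\wdisc_{1/n}(\mathcal{V})\ \ge\ \frac{\sqrt{q-1}}{16}\ =\ \Omh{\sqrt{n}}.
\]
On this profile I would build the associated asymmetric-envy instance on $n$ agents exactly as specified before Lemma~\ref{lem:lowerbound} (agent~$1$ with $v_{1,j}(A)=|A|$; agents $2i$ and $2i+1$ using $v^{S_i}$ or $v^{\bar S_i}=v^{M\setminus S_i}$ according to the comparison). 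Every valuation that appears is one of $|A|$, $|A\cap S_i|$, or $|A\cap\bar S_i|$, hence binary additive, so this is a binary additive instance of the asymmetric envy model. By Lemma~\ref{lem:lowerbound}, any $\EFac$ allocation of it forces $\wdisc_{1/n}(\mathcal{V})\le 6c$; combined with the displayed lower bound this gives $6c\ge \sqrt{q-1}/16$, i.e. every $\EFac$ allocation of this instance requires $c=\Omh{\sqrt{n}}$.

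Finally I would apply the binary-additive clause of Proposition~\ref{prop:equivalent}: there is a binary \emph{no-chores} instance $I$ of the externalities model with $f(I)$ equal to the asymmetric-envy instance just constructed, and an allocation is EF-$c$ in $I$ exactly when it is $\EFac$ in $f(I)$. Hence every EF-$c$ allocation of $I$ requires $c=\Omh{\sqrt{n}}$, as claimed. The work here is essentially assembly; the only things to watch are orienting the inequalities correctly (Theorem~\ref{thm:manurangsimekaLB} lower-bounds $\wdisc$ while Lemma~\ref{lem:lowerbound} upper-bounds it by $6c$, and together they lower-bound $c$) and checking that the constructed instance is genuinely binary additive, since it is precisely the binary clause of Proposition~\ref{prop:equivalent} that yields a chore-free externalities preimage. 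I do not anticipate a substantive obstacle, as the hard content is already encapsulated in Lemma~\ref{lem:lowerbound} and Theorem~\ref{thm:manurangsimekaLB}.
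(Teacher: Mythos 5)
Your proposal is correct and follows essentially the same route as the paper's proof: instantiate Theorem~\ref{thm:manurangsimekaLB} with $p=1/n$ and $q=(n-1)/2$ binary valuations, feed the resulting sets $S_1,\dots,S_q$ into the construction preceding Lemma~\ref{lem:lowerbound} to force $6c\ge\sqrt{q-1}/16$, and pull back to a binary no-chores externalities instance via the binary clause of Proposition~\ref{prop:equivalent}. Your handling of even $n$ via a dummy agent is a minor detail the paper leaves implicit; otherwise the two arguments coincide.
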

\begin{proof}
    The proof of this theorem follows by combining the above lemma with Theorem~\ref{thm:manurangsimekaLB}. The instance of \citet{ManurangsiM2026} consists of $\Theta(n)$ horizontally-stacked copies of $\frac{1}{2}(\boldsymbol{1}_{q\times q}+\boldsymbol{H})$, where $\boldsymbol{1}_{q\times q}$ is a $q\times q$ all-1s matrix and $\boldsymbol{H}$ is a $q\times q$ Hadamard matrix. Clearly, this instance is binary. If we let the sets $S_1, \dots, S_{q}$ be given by the hyperedges in this instance, then the corresponding valuation profile $\mathcal{V}$ consists of $q$ binary additive functions. Taking $p=1/n$, Theorem~\ref{thm:manurangsimekaLB} gives us the following bound:
    \[
        {\textstyle\wdisc_{1/n}(\mathcal{V})}\ge \frac{\sqrt{q-1}}{16}.
    \]
    Since $q=(n-1)/2$, combining this with Lemma~\ref{lem:lowerbound} we get 
    \[
        c\ge\frac{\sqrt{((n-1)/2)-1}}{96},
    \]
    and thus $c=\Omh{\sqrt{n}}$ and the corresponding instance in the asymmetric envy model admits no \EFac allocation for any $c \in o(\sqrt{n})$. Finally, by Proposition~\ref{prop:equivalent}, there exists a binary no-chores instance in the externalities model for which every EF$-c$ allocation requires $c=\Omh{\sqrt{n}}$.
\end{proof}

As stated previously, an important implication of this result is that EF1 allocations do not always exist when the agents have externalities, even when the valuations are binary no-chores, which refutes a conjecture of \citet{DeligkasEKS2024} and resolves the open question of \citet{AzizSSW2023}.

\begin{theorem}[Corollary of Theorem~\ref{thm:envy-lb-binarync}]\label{thm:binary-counterexample}
    There is a binary no-chores instance of the externalities model that admits no EF1 allocation.
\end{theorem}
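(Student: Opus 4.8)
The plan is to obtain this as an immediate quantitative consequence of Theorem~\ref{thm:envy-lb-binarync}. Recall that an EF1 allocation is exactly an EF-$c$ allocation with $c = 1$, so it suffices to pin down a single binary no-chores instance of the externalities model in which no EF-$1$ allocation exists.

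First I would read off from the proof of Theorem~\ref{thm:envy-lb-binarync} the explicit form of its lower bound: for every odd $n$ for which the construction applies (i.e., for which a $q\times q$ Hadamard matrix exists with $q = (n-1)/2$), the constructed binary no-chores externalities instance on $n$ agents satisfies $c \ge \frac{\sqrt{q-1}}{96}$ for every EF-$c$ allocation. I would then choose $n$ so that this bound strictly exceeds $1$, i.e., so that $q - 1 > 96^2 = 9216$. To be sure that such an $n$ exists without invoking the Hadamard conjecture, I would take $q$ to be a power of two, so that Sylvester's construction guarantees the matrix; for instance $q = 2^{14} = 16384$ gives the odd value $n = 2q + 1 = 32769$, and then $q - 1 = 16383 > 9216$, so the lower bound yields $c \ge \sqrt{16383}/96 > 1$.

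Finally I would conclude: for this particular $n$, the instance of Theorem~\ref{thm:envy-lb-binarync} is binary, has no chores, and admits no EF-$1$ allocation --- that is, no EF1 allocation --- which is exactly the claim. I do not expect any real obstacle here beyond Theorem~\ref{thm:envy-lb-binarync} itself; the only points needing a little care are identifying EF1 with the $c = 1$ case of EF-$c$ and certifying that some admissible $n$ lies above the threshold, for which restricting $q$ to powers of two suffices.
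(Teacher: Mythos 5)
Your proposal is correct and follows exactly the route the paper intends: the paper states this result as an immediate corollary of Theorem~\ref{thm:envy-lb-binarync} with no separate proof, relying on the fact that the $\Omh{\sqrt{n}}$ lower bound exceeds $1$ for sufficiently large $n$. Your additional care in fixing $q$ to be a power of two (so Sylvester's construction certifies the Hadamard matrix) and verifying the explicit threshold $q-1 > 96^2$ is a sensible elaboration of details the paper leaves implicit, but it is not a different argument.
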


Interestingly, the above example in the asymmetric envy model can also be used to construct an instance in the externalities model that is not binary, but instead has the property that agent~1 has no externalities and every other agent in the instance only has externalities towards agent~1. Together, Theorems~\ref{thm:envy-lb-binarync} and~\ref{thm:envy-lb-star} show that even highly structured instances with externalities do not allow for EF1 allocations.

\begin{restatable}{theorem}{thmenvylbstar}\label{thm:envy-lb-star}
    There is an instance in the externalities model with $n$ agents in which every agent has externalities towards only agent 1, and for which every EF-$c$ allocation requires $c=\Omh{\sqrt{n}}$.
\end{restatable}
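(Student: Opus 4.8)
The plan is to reuse the lower-bound construction that is already in hand. Theorem~\ref{thm:envy-lb-binarync} gives us, via Lemma~\ref{lem:lowerbound} and Theorem~\ref{thm:manurangsimekaLB}, an instance $J$ in the asymmetric envy model with $n$ agents whose valuations $v_{i,j}$ are binary additive and for which no $\text{EF}_a$-$c$ allocation exists unless $c=\Omega(\sqrt n)$. Inspecting that construction, one sees something stronger than the equivalence used in Proposition~\ref{prop:equivalent}: for every pair $i\neq j$ with $i\ge 2$, the function $v_{i,j}$ depends only on whether $j=1$ or $j\ne 1$ — it equals $v^{S_i}$ or $v^{\bar S_i}$ (for $i$ even, the roles of $S$ and $\bar S$ swapped for $i$ odd), but in particular $v_{i,j}=v_{i,j'}$ whenever $j,j'\notin\{1,i\}$. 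This ``bipartite'' structure of the asymmetric envy instance is exactly what lets us realize it in the externalities model with externalities directed only at agent~1.

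The key step is to exhibit an externalities instance $I$ with $f(I)=J$ in which agent~1 has no externalities and each agent $i\ge 2$ has externalities only toward agent~1. Recall $f$ sets $v_{i,j}(x)=V_i(i,x)-V_i(j,x)$. For agent~1 we want no externalities, so we must set $V_1(j,x)=0$ for all $j\ne 1$; then $v_{1,j}(x)=V_1(1,x)$, and since $v_{1,j}(x)=1$ in $J$ for all $j\ge 2$, we set $V_1(1,x)=1$. For agent $i\ge 2$, ``externalities only toward agent~1'' means $V_i(j,x)=0$ for every $j\notin\{1,i\}$. Then $f$ forces $v_{i,j}(x)=V_i(i,x)$ for all $j\notin\{1,i\}$; since in $J$ we have $v_{i,j}(x)=v^{\bar S_i}(x)$ (resp.\ $v^{S_i}(x)$) for all such $j$, we set $V_i(i,x)$ to this common value, and then set $V_i(1,x)=V_i(i,x)-v_{i,1}(x)$, which recovers the prescribed $v_{i,1}=v^{S_i}$ (resp.\ $v^{\bar S_i}$). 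One checks $f(I)=J$ by construction. (Note that these valuations are no longer binary, nor no-chores: $V_i(1,x)$ may equal $-1$; this is expected, since the theorem only claims the structural property on the externalities graph, not binariness.)

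Finally, the fairness transfer is immediate from Proposition~\ref{prop:equivalent}: for every allocation $A$, $A$ is $\text{EF-}c$ in $I$ iff $A$ is $\text{EF}_a$-$c$ in $f(I)=J$, and we have argued $J$ admits no $\text{EF}_a$-$c$ allocation for $c\in o(\sqrt n)$, so $I$ admits no $\text{EF-}c$ allocation for $c\in o(\sqrt n)$, i.e.\ every $\text{EF-}c$ allocation of $I$ requires $c=\Omega(\sqrt n)$. The only real work is the bookkeeping in the previous paragraph — verifying that the ``depends only on $j=1$ vs.\ $j\ne 1$'' feature of the Lemma~\ref{lem:lowerbound} instance is genuine (it is, directly from the case definitions of $v_{2i,j}$ and $v_{2i+1,j}$) and that the assignment of the $V_i(j,x)$ values is consistent and lands in $f^{-1}(J)$. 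I expect the main obstacle to be purely expository: stating the star-structured externalities instance cleanly and confirming that $f$ applied to it reproduces $J$ exactly, rather than any subtle mathematical difficulty.
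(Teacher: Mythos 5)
Your proposal is correct and follows essentially the same route as the paper: the paper's proof writes down exactly the star-structured valuations that your bookkeeping derives (e.g.\ $V_{2i}(1,x)=-1$ for $x\in S_i$ and $V_{2i}(1,x)=V_{2i}(2i,x)=1$ for $x\in\bar S_i$, with all other externality entries zero) and then invokes Proposition~\ref{prop:equivalent}, Lemma~\ref{lem:lowerbound} and Theorem~\ref{thm:envy-lb-binarync}. The only quibble is your parenthetical claim that the instance fails the no-chores condition: in fact $V_i(i,x)\ge V_i(j,x)$ holds throughout this construction (it is binariness that fails, because of the $-1$ entries), though this is immaterial since the theorem claims neither property.
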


\begin{proof}
Consider the instance $(v_{i,j})_{i,j\in[n]}$ from Lemma~\ref{lem:lowerbound} in the asymmetric envy model. We let $V_1(1,x)=1$ and $V_1(j,x)=0$ for all $x\in M$ and $j\ge2$. For each $i \in [q]$, if $x\in S_i$, we let
\[
V_{2i}(j,x):=
\begin{cases}
-1, & \text{if } j=1,\\
0, & \text{otherwise},
\end{cases}
\]
and otherwise, if $x\in\bar S_i$, we let
\[
V_{2i}(j,x):=
\begin{cases}
1, & \text{if } j=1,\\
1, & \text{if } j=2i,\\
0, & \text{otherwise}.
\end{cases}
\]

Similarly, for each $i \in [q]$, if $x\in\bar S_i$, we let
\[
V_{2i+1}(j,x):=
\begin{cases}
-1, & \text{if } j=1,\\
0, & \text{otherwise},
\end{cases}
\]
and otherwise, if $x\in S_i$, we let
\[
V_{2i+1}(j,x):=
\begin{cases}
1, & \text{if } j=1,\\
1, & \text{if } j=2i+1,\\
0, & \text{otherwise}.
\end{cases}
\]

The above valuation profile in the externalities model satisfies
\[
v_{i,j}(x) \;:=\; V_i(i,x)-V_i(j,x),
\]
for all items $x\in M$ and all agents $i,j\in[n]$ in the asymmetric envy model instance of Lemma~\ref{lem:lowerbound}, and we can thus apply the bound. Clearly, no agent has externalities towards any other agent except for agent~1. By Proposition~\ref{prop:equivalent}, Lemma~\ref{lem:lowerbound} and Theorem~\ref{thm:envy-lb-binarync}, every EF-$c$ allocation in this instance requires $c=\Omh{\sqrt{n}}$.
\end{proof}


\section{An Ex-Ante Truthful Mechanism}\label{sec:consensus}

In this section, we present a \emph{consensus} version of our upper bound, in which the bundles of the resulting allocation can be permuted arbitrarily amongst the agents while maintaining the fairness guarantee. This allows us to implement the resulting construction as a mechanism that is \textit{truthful in expectation}, i.e., in which no agent can improve its expected value (ex-ante) by misreporting its valuation. Here, the expectation is over the randomness of the mechanism, and truthfulness follows directly from the fact that the agents can be matched to the bundles uniformly at random.

Specifically, we combine \Cref{lem:disc-to-efk} and a symmetric application of \Cref{thm:asym-disc} to obtain a generalization of the consensus upper bound of \citet[Theorem 1.3]{ManurangsiS2022} to a mixture of goods and chores.

\begin{restatable}{theorem}{thmconsensus}\label{thm:consensus}
    Let $v_1,\dots,v_n$ be additive valuation functions that may assign both positive and negative values to items, and let $k$ be a number of colors. There exists a partition of the items into $k$ bundles $M = A_1\sqcup \dots\sqcup A_k$ such that for any $\ell,\ell' \in [k]$ and any $i \in [n]$, one can eliminate envy (with respect to $v_i$) between $A_\ell$ and $A_{\ell'}$ by discarding at most $\Oh{\sqrt{n}}$ items. Moreover, such a partition can be computed in deterministic polynomial time.
\end{restatable}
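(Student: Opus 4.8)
The plan is to mirror the proof of Theorem~\ref{thm:asym-envy-nonconsensus}, but carried out in the \emph{symmetric} discrepancy setting, so that the boundedness condition we rely on holds for \emph{every} color class rather than only for the two bundles attached to a fixed ordered pair. First, for each $i\in[n]$ I would apply Lemma~\ref{lem:disc-to-efk} to the valuation $v_i$ with the given number of colors $k$ and with the threshold $T=\Oh{\sqrt{n}}$ that Theorem~\ref{thm:asym-disc} will later guarantee. This produces, for each $i$, four auxiliary additive functions $v_i^1,v_i^2,v_i^3,v_i^4:2^M\to\mathbb{R}$, all taking values in $[0,1]$.

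Next I would set up a symmetric instance of asymmetric discrepancy with $k$ colors by taking, for \emph{every} color $\ell\in[k]$, the same collection $\mathcal{V}_\ell:=\{\,v_i^t : i\in[n],\ t\in[4]\,\}$, so that $n_1=\cdots=n_k=4n$. Every function in every $\mathcal{V}_\ell$ lies in $[0,1]$, so Theorem~\ref{thm:asym-disc} applies and yields, in deterministic polynomial time, a coloring $\chi:M\to[k]$ with $\left|\tfrac{1}{k}v_i^t(M)-v_i^t(\chi^{-1}(\ell))\right|\le T$ for all $i\in[n]$, $t\in[4]$, and $\ell\in[k]$. Setting $A_\ell:=\chi^{-1}(\ell)$ gives a partition $M=A_1\sqcup\cdots\sqcup A_k$.

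Finally, I would fix $i\in[n]$ and two distinct colors $\ell,\ell'$. The sets $A_\ell$ and $A_{\ell'}$ are disjoint, and for each $t\in[4]$ and each $X\in\{A_\ell,A_{\ell'}\}$ the displayed bound is exactly hypothesis~\eqref{eq:disc-to-efk-assump} of Lemma~\ref{lem:disc-to-efk} for the valuation $v_i$. Hence Lemma~\ref{lem:disc-to-efk} lets us eliminate the envy (in either direction) between $A_\ell$ and $A_{\ell'}$ measured by $v_i$ by discarding at most $14T=\Oh{\sqrt{n}}$ items. Since $i,\ell,\ell'$ were arbitrary, this establishes the theorem, and the polynomial-time claim is inherited directly from Theorem~\ref{thm:asym-disc}.

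The one genuine subtlety — and the step I would be most careful to get right — is the difference from the non-consensus construction of Theorem~\ref{thm:asym-envy-nonconsensus}: there the auxiliary functions for a pair $(i,j)$ only needed to be controlled on $A_i$ and $A_j$, whereas in the consensus statement any two bundles $A_\ell,A_{\ell'}$ may be compared by any agent $i$, so all $4n$ functions $v_i^t$ must appear in \emph{every} class $\mathcal{V}_\ell$. This is precisely why the symmetric use of Theorem~\ref{thm:asym-disc} is the right tool, and it also makes $4n$ (rather than a per-pair $\Theta(n)$) the relevant size parameter; since $\sqrt{4n}=\Oh{\sqrt{n}}$, the bound is unaffected, and no discrepancy estimate beyond Theorem~\ref{thm:asym-disc} and the goods-and-chores reduction already proved in Lemma~\ref{lem:disc-to-efk} is needed.
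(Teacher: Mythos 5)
Your proposal is correct and follows essentially the same route as the paper's own proof: the same choice of $n_1=4n$ in Theorem~\ref{thm:asym-disc}, the same symmetric instance with all $4n$ auxiliary functions placed in every color class, and the same final application of Lemma~\ref{lem:disc-to-efk} to each pair of bundles. No gaps.
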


\begin{proof}
Let $T = \Oh{\sqrt{n}}$ denote the discrepancy bound guaranteed by
Theorem~\ref{thm:asym-disc} when applied with $n_1 = 4n$.

We first apply Lemma~\ref{lem:disc-to-efk} to each valuation function
$v_i$, for $i \in [n]$, with parameters $k$ and $T$. This yields,
for every $i$, four additive valuation functions
\[
v_i^1, v_i^2, v_i^3, v_i^4 : 2^M \to \mathbb{R},
\]
such that $v_i^j(x) \in [0,1]$ for all $x \in M$ and all
$j \in [4]$.

Next, we construct a symmetric discrepancy instance as follows. For each
color $\ell \in [k]$, define the multiset of valuation functions
\[
\mathcal{V}_\ell
\;=\;
\{v_1^1, v_1^2, v_1^3, v_1^4, \dots, v_n^1, v_n^2, v_n^3, v_n^4\}.
\]
Thus, for every $\ell$, we have $n_\ell = 4n$, and all valuation
functions satisfy the boundedness condition required by
Theorem~\ref{thm:asym-disc}.

Applying Theorem~\ref{thm:asym-disc} to this instance yields a
$k$-coloring of the items, which induces a partition
\[
M = A_1 \sqcup \cdots \sqcup A_k,
\]
such that for every $\ell \in [k]$, every $i \in [n]$, and every
$j \in [4]$,
\[
\Bigl| \tfrac{1}{k} v_i^j(M) - v_i^j(A_\ell) \Bigr| \;\le\; T.
\]

Fix any two bundles $A_\ell$ and $A_{\ell'}$ and any index $i\in[n]$.
The above inequalities verify condition~\eqref{eq:disc-to-efk-assump} of
Lemma~\ref{lem:disc-to-efk} for the sets $A_\ell$ and $A_{\ell'}$.
Hence, Lemma~\ref{lem:disc-to-efk} implies that the envy measured with
respect to the original valuation $v_i$ between $A_\ell$ and $A_{\ell'}$
can be eliminated by discarding at most $14T = \Oh{\sqrt{n}}$ items.
\end{proof}

The corollary below follows by combining an application of Theorem~\ref{thm:consensus} with the simple mechanism described in~\citet{BuT2025} that assigns the bundles uniformly at random.

\begin{restatable}{corollary}{cortruthful}\label{cor:truthful}
    Given $n$ agents with additive valuation functions that may assign both positive and negative values to items, there exists a randomized ex-ante truthful mechanism running in polynomial time that outputs an allocation $(A_1,\dots,A_n)$ that is EF-$\Oh{\sqrt{n}}$.
\end{restatable}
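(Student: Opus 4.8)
The plan is to invoke Theorem~\ref{thm:consensus} with the number of colors set equal to the number of agents, $k=n$, applied to the agents' valuation functions $v_1,\dots,v_n$. This produces, in deterministic polynomial time, a partition $M = A_1\sqcup\cdots\sqcup A_n$ into $n$ bundles with the consensus guarantee: for \emph{every} ordered pair of bundles $A_\ell,A_{\ell'}$ and every agent $i\in[n]$, the envy of agent $i$ (measured by $v_i$) between $A_\ell$ and $A_{\ell'}$ can be erased by discarding at most $\Oh{\sqrt n}$ items. The crucial point, which we would emphasize, is that this bound is \emph{label-independent}: it holds for any assignment of the $n$ bundles to the $n$ agents, not just for the identity assignment.

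Next I would describe the mechanism of~\citet{BuT2025}: compute the partition $(A_1,\dots,A_n)$ as above, then draw a uniformly random permutation $\pi$ of $[n]$ and give bundle $A_{\pi(i)}$ to agent $i$. I would then argue truthfulness in expectation. Since the partition $(A_1,\dots,A_n)$ is constructed \emph{without reference to which agent will receive which bundle}, an agent $i$ who misreports its valuation can only influence which partition is produced, not how the bundles are matched to agents; and conditioned on any fixed partition, agent $i$ receives each of the $n$ bundles with probability exactly $1/n$, so its expected utility equals $\tfrac1n\sum_{\ell\in[n]} v_i(\text{allocation in which }i\text{ gets }A_\ell)$. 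Because the random matching is independent of the reports and symmetric over bundles, there is no report that lets agent $i$ steer the expectation in its favor beyond what the (truthful) partition already gives; hence reporting truthfully is a best response, i.e., the mechanism is ex-ante truthful. Finally, the fairness guarantee is deterministic: for \emph{any} realization of $\pi$, and for any two agents $i$ and $i'$ holding bundles $A_{\pi(i)}$ and $A_{\pi(i')}$, Theorem~\ref{thm:consensus} (applied with $\ell=\pi(i)$, $\ell'=\pi(i')$) says agent $i$'s envy toward agent $i'$ can be eliminated by discarding $\Oh{\sqrt n}$ items, so the output allocation is always EF-$\Oh{\sqrt n}$. Polynomial running time is immediate since the partition step is polynomial by Theorem~\ref{thm:consensus} and sampling a permutation is trivial.

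The only real subtlety — and the step I would state most carefully — is matching the bookkeeping: Theorem~\ref{thm:consensus} phrases envy ``between $A_\ell$ and $A_{\ell'}$ with respect to $v_i$'' for a bundle index $i$ and a valuation index $i$ that coincide, so I must make explicit that when agent $i$ holds $A_{\pi(i)}$ and agent $i'$ holds $A_{\pi(i')}$, the relevant comparison uses $v_i$ between these two bundles, which is exactly the instance of the theorem with valuation $v_i$ and colors $\ell=\pi(i), \ell'=\pi(i')$. This is routine but worth spelling out so the EF-$\Oh{\sqrt n}$ claim in the externalities-free additive (goods-and-chores) setting is unambiguous. Everything else is a direct combination of Theorem~\ref{thm:consensus} with the random-matching mechanism, exactly paralleling the goods-only argument of~\citet{BuT2025,ManurangsiS2022}.
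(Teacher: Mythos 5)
Your proposal is correct and follows essentially the same route as the paper: apply Theorem~\ref{thm:consensus} with $k=n$ and assign the resulting bundles via a uniformly random permutation as in \citet{BuT2025}, with the consensus (label-independent) guarantee giving EF-$\Oh{\sqrt{n}}$ for every realization. The only polish worth adding is to close the truthfulness argument by noting that, by additivity over the complete partition, the expected utility $\tfrac{1}{n}\sum_{\ell}v_i(A_\ell)$ equals $\tfrac{1}{n}v_i(M)$ no matter which partition the (possibly misreported) valuations produce, so no report can change agent $i$'s expected value at all.
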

\begin{proof}
The corollary follows from using Theorem~\ref{thm:consensus} with $k = n$ colors, together with the simple mechanism described in~\citet{BuT2025}, which assigns the bundles uniformly at random. The expected value received by agent $i$ is $\tfrac{1}{n} v_i(M)$, regardless of the reports of the agents. In fact, this guarantee is much stronger than truthfulness: no coalition of agents can influence an agent’s expected value.
\end{proof}

In the externalities model, we get the following corollary by applying Theorem~\ref{thm:consensus} to the $n(n-1)$ asymmetric envy valuations $(v_{i,j})_{i\neq j}$ produced by Proposition~\ref{prop:equivalent}. This yields a consensus bound of order $\Oh{\sqrt{n(n-1)}}=\Oh{n}$.

\begin{restatable}{corollary}{corextcon}\label{cor:externalities-consensus}
    Given an instance of the externalities model, there exists a randomized ex-ante truthful mechanism running in polynomial time that outputs an allocation $(A_1,\dots,A_n)$ that is EF-$\Oh{n}$.
\end{restatable}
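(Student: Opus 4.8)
The plan is to reduce the externalities instance to an instance of the symmetric (consensus) discrepancy-based construction of Theorem~\ref{thm:consensus}, using the asymmetric envy valuations produced by Proposition~\ref{prop:equivalent} as the input valuation functions. First I would apply Proposition~\ref{prop:equivalent} to the given externalities instance $I$ to obtain the equivalent asymmetric envy instance $f(I)$ with valuations $v_{i,j}(x) = V_i(i,x) - V_i(j,x)$ for every ordered pair $i \neq j$; there are $n(n-1)$ such functions, each additive but possibly mixed-sign. These $n(n-1)$ functions $\{v_{i,j} : i \neq j\}$ are then fed into Theorem~\ref{thm:consensus} as the list $v_1, \dots, v_{n(n-1)}$, with the number of colors set to $k = n$.

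Theorem~\ref{thm:consensus} returns, in deterministic polynomial time, a partition $M = A_1 \sqcup \cdots \sqcup A_n$ such that for any two bundles $A_\ell, A_{\ell'}$ and any one of the $n(n-1)$ valuation functions $v_{i,j}$, one can eliminate the envy measured by $v_{i,j}$ between $A_\ell$ and $A_{\ell'}$ by discarding at most $\Oh{\sqrt{n(n-1)}} = \Oh{n}$ items. In particular, this holds for every ordered pair $(i,j)$ and for the specific pair of bundles $(A_\ell, A_{\ell'})$ that any agent-to-bundle assignment hands to agents $i$ and $j$ — this is exactly the consensus property. Translating back through Proposition~\ref{prop:equivalent}, the partition is EF-$\Oh{n}$ in the externalities model, and remains so under any permutation of the bundles among the agents, since the bound held for all ordered pairs $(i,j)$ and all pairs of bundles simultaneously. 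Finally, composing with the mechanism of~\citet{BuT2025} — which takes this fixed consensus partition and assigns its bundles to the $n$ agents by a uniformly random bijection — yields a randomized polynomial-time mechanism whose output is always EF-$\Oh{n}$, and in which each agent $i$ receives expected value $\tfrac{1}{n} V_i(M)$ regardless of all agents' reports; hence no agent can gain in expectation by misreporting, i.e. the mechanism is ex-ante truthful.

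The only genuinely delicate point is bookkeeping about what the consensus guarantee means in the externalities setting: we must make sure that the $\Oh{n}$ bound applies to the \emph{ordered} pair $(i,j)$ whose asymmetric valuation $v_{i,j}$ governs $i$'s envy of $j$, \emph{and} to whichever two of the $n$ bundles they end up holding — but since Theorem~\ref{thm:consensus} quantifies over all $v_i$ in the list and all pairs $(\ell,\ell')$, and our list already contains every $v_{i,j}$, this is immediate. Everything else — additivity of $f(I)$'s valuations, polynomial running time, the expected-value computation for truthfulness, and the claim that the randomized bijection preserves the per-pair EF-$\Oh{n}$ property with probability one (not merely in expectation) — is routine and inherited directly from the cited results. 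I do not anticipate any real obstacle; the corollary is essentially a packaging of Theorem~\ref{thm:consensus}, Proposition~\ref{prop:equivalent}, and~\citet{BuT2025}.
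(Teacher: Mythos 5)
Your proposal is correct and follows essentially the same route as the paper's proof: Proposition~\ref{prop:equivalent} to pass to the asymmetric envy model, Theorem~\ref{thm:consensus} applied to the $n(n-1)$ valuations $v_{i,j}$ with $k=n$ colors, and the uniformly random bundle assignment of \citet{BuT2025} for ex-ante truthfulness. The only cosmetic difference is that the paper writes the expected utility explicitly as $\tfrac{1}{n}\sum_{x\in M}\sum_{j\in N}V_i(j,x)$ rather than your shorthand $\tfrac{1}{n}V_i(M)$, which is not literally defined in the externalities model but clearly denotes the same quantity.
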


\begin{proof}
Given an instance $I$ of the externalities model, use Proposition~\ref{prop:equivalent} to obtain an equivalent instance $f(I)$ in the asymmetric envy model with respect to envy-based comparisons. Apply Theorem~\ref{thm:consensus} to the collection of $n(n-1)$ valuation functions $\{v_{i,j} : i,j\in N,\, i\neq j\}$ with $k=n$ colors. This produces a partition $M=B_1\sqcup\cdots\sqcup B_n$ such that every $v_{i,j}$ is ``balanced'' between any two bundles up to discarding $\Oh{\sqrt{n(n-1)}}=\Oh{n}$ items.

Finally, assign the bundles uniformly at random to the agents as in~\citet{BuT2025}. Since each item is assigned to each agent with probability $1/n$, the expected value of agent $i$ is
\[
\frac{1}{n}\sum_{x\in M}\sum_{j\in N} V_i(j,x),
\]
regardless of the reports of the agents. Because the partition is consensus (i.e., it works for every pair of bundles), the resulting allocation is EF-$\Oh{n}$ in $f(I)$, and therefore also EF-$\Oh{n}$ in $I$ by Proposition~\ref{prop:equivalent}.
\end{proof}


\section{Conclusions}\label{sec:conclusions}

For the fair division of indivisible items amongst agents with externalities, we find asymptotically tight bounds on the optimal relaxation EF-$k$ that can always be attained. Specifically, for the general case of additive valuations, we show that an EF-$\Oh{\sqrt{n}}$ allocation always exists, and give a matching $\Omh{\sqrt{n}}$ lower bound even for the special case of binary valuations with no chores. In the process, we answer two open questions of \citet{AzizSSW2023}, and resolve a conjecture of \citet{DeligkasEKS2024}, showing that EF1 allocations do not always exist when the agents have externalities. We also present an ex-ante-truthful consensus-based mechanism that finds an EF-$\Oh{n}$ allocation when the agents have externalities. Our work raises interesting new directions for research.

\begin{itemize}
    \item \emph{Small explicit counterexample}: Our proof technique shows how to reduce a known lower bound from discrepancy into a collection of lower bound instances for the externalities model. Nonetheless, we leave open the question of describing a small explicit counterexample, and of determining the largest value of $n$ for which an EF1 allocation exists for additive/binary valuations with or without the no-chores condition.
    
    \item \emph{Consensus lower-bound}: We give an ex-ante-truthful consensus-based mechanism that finds an EF-$\Oh{n}$ allocation when the agents have externalities. Does there exist a matching lower bound in this setting?
    
    \item \emph{Other settings}: Can our approach be applied to find similar bounds in other settings, such as fair division with social impact~\citep{FlamminiGV2025,DeligkasEGKS2026}?
\end{itemize}


\appendix
\clearpage
\section*{Acknowledgments}
This project has received funding from the European Research Council (ERC) under the European Union’s Horizon 2020 research and innovation programme (grant agreement No 101002854) and by the European Union under the project Robotics and advanced industrial production (reg. no. CZ.02.01.01/00/22\_008/0004590). 
The first author has been supported by NSERC Discovery Grant 2022-04191. We are grateful to Adrian Vetta for discussions.
    
\begin{center}
    \includegraphics[width=4cm]{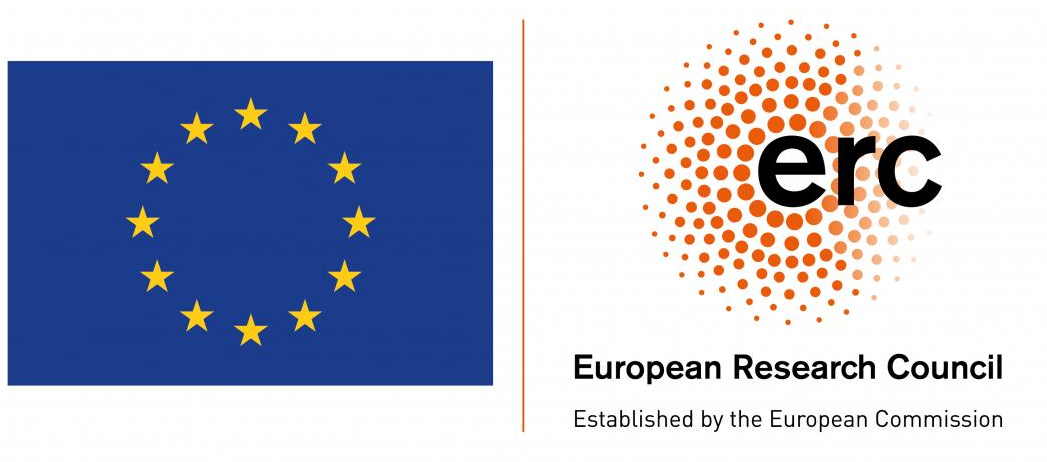}
\end{center}


\bibliographystyle{named}
\bibliography{references}

@article{AmanatidisABFLMVW2023,
    author  = {Georgios Amanatidis and Haris Aziz and Georgios Birmpas and Aris Filos{-}Ratsikas and Bo Li and Herv{\'{e}} Moulin and Alexandros A. Voudouris and Xiaowei Wu},
    title   = {Fair division of indivisible goods: {R}ecent progress and open questions},
    journal = {Artificial Intelligence},
    year    = {2023},
    volume  = {322},
    pages   = {103965},
    doi     = {10.1016/j.artint.2023.103965},
}

@incollection{Thomson2016,
	Author = {William Thomson},
	Booktitle = {Handbook of Computational Social Choice},
	Chapter = {11},
	editor = {Felix Brandt and Vincent Conitzer and Ulle Endriss and J\'{e}r\^{o}me Lang and Ariel D. Procaccia},
	Publisher = {Cambridge University Press},
	Title = {Introduction to the theory of fair allocation},
	Pages = {261--283},
	Year = {2016},
}

@inproceedings{NguyenR2023,
    author    = {Trung Thanh Nguyen and J\"{o}rg Rothe},
    title     = {Complexity results and exact algorithms for fair division of indivisible items: {A} survey},
    booktitle = {Proceedings of the 32nd International Joint Conference on Artificial Intelligence, IJCAI '23},
    year      = 2023,
    pages     = {6732--6740},
    publisher = {ijcai.org}
}

@inproceedings{LiptonMMS2004,
    author    = {Richard J. Lipton and Evangelos Markakis and Elchanan Mossel and Amin Saberi},
    title     = {On approximately fair allocations of indivisible goods},
    editor_    = {Jack S. Breese and Joan Feigenbaum and Margo I. Seltzer},
    booktitle = {Proceedings of the 5th {ACM} Conference on Electronic Commerce, {EC}~'24},
    location  = {New York, NY, USA},
    year      = {2004},
    pages     = {125--131},
    publisher = {{ACM}},
    doi       = {10.1145/988772.988792},
}

@article{Budish2011,
    author  = {Eric Budish},
    title   = {The combinatorial assignment problem: {A}pproximate competitive equilibrium from equal incomes},
    journal = {Journal of Political Economy},
    year    = {2011},
    volume  = {119},
    number  = {6},
    pages   = {1061--1103},
    doi     = {10.1086/664613},
}

@article{Foley1967,
    author  = {Duncan Foley},
    title   = {Resource allocation and the public sector},
    journal = {Yale Economic Essays},
    year    = {1967},
    volume  = {7},
    number  = {1}
}

@article{AyresK1969,
    author  = {Robert U. Ayres and Allen V. Kneese},
    title   = {Production, consumption, and externalities},
    journal = {American Economic Review},
    year    = {1969},
    volume  = {59},
    number  = {3},
    pages   = {282--297},
}

@article{KatzS1985,
    author  = {Michael L. Katz and Carl Shapiro},
    title   = {Network externalities, competition, and compatibility},
    journal = {American Economic Review},
    year    = {1985},
    volume  = {75},
    number  = {3},
    pages   = {424--440},
}

@article{CaragiannisKMPS19,
    author  = {Ioannis Caragiannis and David Kurokawa and Herv{\'{e}} Moulin and Ariel D. Procaccia and Nisarg Shah and Junxing Wang},
    title   = {The Unreasonable Fairness of Maximum {N}ash Welfare},
    journal = {{ACM} Transactions on Economics and Computation},
    year    = {2019},
    volume  = {7},
    number  = {3},
    pages   = {12:1--12:32},
    doi     = {10.1145/3355902},
}

@article{PlautR2020,
    author  = {Benjamin Plaut and Tim Roughgarden},
    title   = {Almost Envy-Freeness with General Valuations},
    journal = {{SIAM} Journal on Discrete Mathematics},
    year    = {2020},
    volume  = {34},
    number  = {2},
    pages   = {1039--1068},
    doi     = {10.1137/19M124397X},
}

@book{BramsT1996,
    author    = {Steven J. Brams and Alan D. Taylor},
    title     = {Fair division: {F}rom cake-cutting to dispute resolution},
    publisher = {Cambridge University Press},
    year      = {1996},
    isbn      = {978-0-521-55644-6},
}

@article{Velez2016,
    author  = {Rodrigo A. Velez},
    title   = {Fairness and Externalities},
    journal = {Theoretical Economics},
    year    = {2016},
    volume  = {11},
    number  = {1},
    pages   = {381--410},
    doi     = {10.3982/TE1651},
}

@inproceedings{AzizSSW2023,
    author    = {Haris Aziz and Warut Suksompong and Zhaohong Sun and Toby Walsh},
    title     = {Fairness Concepts for Indivisible Items with Externalities},
    editor_    = {Brian Williams and Yiling Chen and Jennifer Neville},
    booktitle = {Proceedings of the 37th {AAAI} Conference on Artificial Intelligence, {AAAI}~'23},
    location  = {Washington, DC, USA},
    year      = {2023},
    pages     = {5472--5480},
    publisher = {{AAAI} Press},
    address_   = {Washington, DC, USA},
    doi       = {10.1609/aaai.v37i5.25680},
}

@inproceedings{DeligkasEKS2024,
    author    = {Argyrios Deligkas and Eduard Eiben and Viktoriia Korchemna and Šimon Schierreich},
    title     = {The Complexity of Fair Division of Indivisible Items with Externalities},
    editor_    = {Michael J. Wooldridge and Jennifer G. Dy and Sriraam Natarajan},
    booktitle = {Proceedings of the 38th {AAAI} Conference on Artificial Intelligence, {AAAI}~'24},
    location  = {Vancouver, BC, Canada},
    pages     = {9653--9661},
    publisher = {{AAAI} Press},
    address_   = {Washington, DC, USA},
    year      = {2024},
    doi       = {10.1609/aaai.v38i9.28822},
}

@inproceedings{SeddighinSG2019,
    author    = {Masoud Seddighin and Hamed Saleh and Mohammad Ghodsi},
    editor_    = {Ling Liu and Ryen W. White and Amin Mantrach and Fabrizio Silvestri and Julian J. McAuley and Ricardo Baeza{-}Yates and Leila Zia},
    title     = {Externalities and Fairness},
    booktitle = {Proceedings of the 28th World Wide Web Conference, {WWW}~'19},
    location  = {San Francisco, CA, USA},
    year      = {2019},
    pages     = {538--548},
    publisher = {{ACM}},
    doi       = {10.1145/3308558.3313670},
}

@inproceedings{MishraPG2022,
  author    = {Shaily Mishra and Manisha Padala and Sujit Gujar},
  title     = {Fair Allocation with Special Externalities},
  editor_    = {Sankalp Khanna and Jian Cao and Quan Bai and Guandong Xu},
  booktitle = {Proceedings of the 19th Pacific Rim International Conference on Artificial Intelligence, {PRICAI}~'22},
  year      = {2022},
  location  = {Shanghai, China},
  series    = {Lecture Notes in Computer Science},
  volume    = {13629},
  pages     = {3--16},
  publisher = {Springer},
  doi       = {10.1007/978-3-031-20862-1\_1},
}

@inproceedings{BranzeiPZ2013,
    author    = {Simina Br{\^{a}}nzei and Ariel D. Procaccia and Jie Zhang},
    editor_    = {Francesca Rossi},
    title     = {Externalities in Cake Cutting},
    booktitle = {Proceedings of the 23rd International Joint Conference on Artificial Intelligence, {IJCAI}~'13},
    location  = {Beijing, China},
    pages     = {55--61},
    publisher = {ijcai.org},
    year      = {2013},
    url       = {http://www.aaai.org/ocs/index.php/IJCAI/IJCAI13/paper/view/6945},
}

@inproceedings{LiZZ2015,
  author    = {Minming Li and Jialin Zhang and Qiang Zhang},
  editor_    = {Qiang Yang and Michael J. Wooldridge},
  title     = {Truthful Cake Cutting Mechanisms with Externalities: Do Not Make Them Care for Others Too Much!},
  booktitle = {Proceedings of the 24th International Joint Conference on Artificial Intelligence, {IJCAI}~'15},
  location  = {Buenos Aires, Argentina},
  pages     = {589--595},
  publisher = {{AAAI} Press},
  year      = {2015},
  url       = {http://ijcai.org/Abstract/15/089},
}

@article{ManurangsiS2022,
    author  = {Pasin Manurangsi and Warut Suksompong},
    title   = {Almost envy-freeness for groups: {I}mproved bounds via discrepancy theory},
    journal = {Theoretical Computer Science},
    year    = {2022},
    volume  = {930},
    pages   = {179--195},
    doi     = {10.1016/j.tcs.2022.07.022},
}

@inproceedings{ManurangsiM2026,
    author    = {Pasin Manurangsi and Raghu Meka},
    title     = {Tight Lower Bound for Multicolor Discrepancy},
    editor_    = {Sepehr Assadi and Eva Rotenberg},
    booktitle = {Proceedings of the 9th {SIAM} Symposium on Simplicity in Algorithms, SOSA~'26},
    location  = {Vancouver, BC, Canada},
    year      = {2026},
    pages     = {266--274},
    publisher = {SIAM},
    doi       = {10.1137/1.9781611978964.20},
}

@inproceedings{LevyRR2017,
    author    = {Avi Levy and Harishchandra Ramadas and Thomas Rothvoss},
    editor_    = {Friedrich Eisenbrand and Jochen K{\"{o}}nemann},
    title     = {Deterministic Discrepancy Minimization via the Multiplicative Weight Update Method},
    booktitle = {Proceedings of the 19th International Conference on Integer Programming and Combinatorial Optimization, {IPCO}~'17},
    location  = {Waterloo, ON, Canada},
    series    = {Lecture Notes in Computer Science},
    year      = {2017},
    volume    = {10328},
    pages     = {380--391},
    publisher = {Springer},
    doi       = {10.1007/978-3-319-59250-3\_31},
}

@inproceedings{BuT2025,
    author    = {Xiaolin Bu and Biaoshuai Tao},
    title     = {Truthful and Almost Envy-Free Mechanism of Allocating Indivisible Goods: {T}he Power of Randomness}, 
    booktitle = {Proceedings of the 66th {IEEE} Symposium on Foundations of Computer Science, {FOCS}~'25},
    location  = {Sydney, Australia},
    year      = {2025},
    publisher = {IEEE}
}

@inproceedings{CaragiannisLS2025,
    author    = {Ioannis Caragiannis and Kasper Green Larsen and Sudarshan Shyam},
    editor_    = {Ron Lavi and Jie Zhang},
    title     = {A New Lower Bound for Multicolor Discrepancy with Applications to Fair Division},
    booktitle = {Proceedings of the 18th International Symposium on Algorithmic Game Theory, {SAGT}~'25},
    series    = {Lecture Notes in Computer Science},
    volume    = {15953},
    pages     = {228--246},
    publisher = {Springer},
    year      = {2025},
    doi       = {10.1007/978-3-032-03639-1\_13},
}

@article{DoerrS2003,
    author  = {Benjamin Doerr and Anand Srivastav},
    title   = {Multicolour Discrepancies},
    journal = {Combinatorics, Probability \& Computing},
    year    = {2003},
    volume  = {12},
    number  = {4},
    pages   = {365--399},
    doi     = {10.1017/s0963548303005662},
}

@inproceedings{BuLLST2023,
    author    = {Xiaolin Bu and Zihao Li and Shengxin Liu and Jiaxin Song and Biaoshuai Tao},
    editor_    = {Jugal Garg and Max Klimm and Yuqing Kong},
    title     = {Fair Division with Allocator's Preference},
    booktitle = {Proceedings of the 19th International Conference on Web and Internet Economics, {WINE}~'23},
    location  = {Shanghai, China},
    year      = {2023},
    series    = {Lecture Notes in Computer Science},
    volume    = {14413},
    pages     = {77--94},
    publisher = {Springer},
    doi       = {10.1007/978-3-031-48974-7\_5},
}

@inproceedings{FlamminiGV2025,
    author    = {Michele Flammini and Gianluigi Greco and Giovanna Varricchio},
    title     = {Fair Division with Social Impact},
    editor_    = {Toby Walsh and Julie Shah and Zico Kolter},
    booktitle = {Proceedings of the 39th {AAAI} Conference on Artificial Intelligence, {AAAI}~'25},
    pages     = {13856--13863},
    publisher = {{AAAI} Press},
    year      = {2025},
    doi       = {10.1609/aaai.v39i13.33515},
}

@inproceedings{DeligkasEGKS2026,
    author    = {Argyrios Deligkas and Eduard Eiben and Tiger-Lily Goldsmith and Dušan Knop and Šimon Schierreich},
    title     = {Dividing Indivisible Items for the Benefit of All: It is Hard to Be Fair Without Social Awareness},
    editor_    = {Toby Walsh and Julie Shah and Zico Kolter},
    booktitle = {Proceedings of the 40th {AAAI} Conference on Artificial Intelligence, {AAAI}~'26},
    year      = {2026},
    publisher = {{AAAI} Press},
}

@inproceedings{ManurangsiS2025,
    author    = {Pasin Manurangsi and Warut Suksompong},
    title     = {Asymptotic Fair Division: Chores Are Easier Than Goods},
    booktitle = {Proceedings of the 34th International Joint Conference on Artificial Intelligence, {IJCAI}~'25},
    location  = {Montreal, Canada},
    pages     = {3988--3995},
    publisher = {ijcai.org},
    year      = {2025},
    doi       = {10.24963/ijcai.2025/444},
}

@inproceedings{ManurangsiSY2025,
    author    = {Pasin Manurangsi and Warut Suksompong and Tomohiko Yokoyama},
    title     = {Asymptotic Analysis of Weighted Fair Division},
    booktitle = {Proceedings of the 34th International Joint Conference on Artificial Intelligence, {IJCAI}~'25},
    location  = {Montreal, Canada},
    pages     = {3979--3987},
    publisher = {ijcai.org},
    year      = {2025},
    doi       = {10.24963/ijcai.2025/443},
}

@article{ManurangsiS2021,
    author  = {Pasin Manurangsi and Warut Suksompong},
    title   = {Closing Gaps in Asymptotic Fair Division},
    journal = {{SIAM} Journal on Discrete Mathematics},
    year    = {2021},
    volume  = {35},
    number  = {2},
    pages   = {668--706},
    doi     = {10.1137/20M1353381},
}

@article{ManurangsiS2020,
    author  = {Pasin Manurangsi and Warut Suksompong},
    title   = {When Do Envy-Free Allocations Exist?},
    journal = {{SIAM} Journal on Discrete Mathematics},
    year    = {2020},
    volume  = {34},
    number  = {3},
    pages   = {1505--1521},
    doi     = {10.1137/19M1279125},
}

@inproceedings{GargNS2026,
    author    = {Jugal Garg and Vishnu V. Narayan and Yuang Eric Shen},
    title     = {Designing Truthful Mechanisms for Asymptotic Fair Division},
    editor_    = {Toby Walsh and Julie Shah and Zico Kolter},
    booktitle = {Proceedings of the 40th {AAAI} Conference on Artificial Intelligence, {AAAI}~'26},
    year      = {2026},
    publisher = {{AAAI} Press},
}

@article{ManurangsiS2017,
    author  = {Pasin Manurangsi and Warut Suksompong},
    title   = {Asymptotic existence of fair divisions for groups},
    journal = {Mathematical Social Sciences},
    year    = {2017},
    volume  = {89},
    pages   = {100--108},
    doi     = {10.1016/j.mathsocsci.2017.05.006},
}

@article{Suksompong2018,
    author  = {Warut Suksompong},
    title   = {Approximate maximin shares for groups of agents},
    journal = {Mathematical Social Sciences},
    year    = {2018},
    volume  = {92},
    pages   = {40--47},
    url     = {https://doi.org/10.1016/j.mathsocsci.2017.09.004},
    doi     = {10.1016/j.mathsocsci.2017.09.004},
}

@article{SegalHaleviN2019,
    author  = {Erel Segal{-}Halevi and Shmuel Nitzan},
    title   = {Fair cake-cutting among families},
    journal = {Social Choice and Welfare},
    year    = {2019},
    volume  = {53},
    number  = {4},
    pages   = {709--740},
    doi     = {10.1007/s00355-019-01210-9},
}

@inproceedings{ManurangsiS2024,
    author    = {Pasin Manurangsi and Warut Suksompong},
    title     = {Ordinal Maximin Guarantees for Group Fair Division},
    booktitle = {Proceedings of the 33rd International Joint Conference on Artificial Intelligence, {IJCAI}~'24},
    location  = {Jeju, South Korea},
    year      = {2024},
    pages     = {2922--2930},
    publisher = {ijcai.org},
}

@article{GolzY2025,
    author       = {Paul G{\"{o}}lz and Hannane Yaghoubizade},
    title        = {Fair Division Among Couples and Small Groups},
    journal      = {CoRR},
    volume       = {abs/2508.13432},
    year         = {2025},
    url          = {https://doi.org/10.48550/arXiv.2508.13432},
    doi          = {10.48550/ARXIV.2508.13432},
    eprinttype    = {arXiv},
    eprint       = {2508.13432},
}

@article{DupreLaTour2026,
    author       = {Max {Dupré la Tour}},
    title        = {Bad News for Couples: Tight Lower Bounds for Fair Division of Indivisible Items},
    journal      = {CoRR},
    volume       = {abs/2601.01012},
    year         = {2026},
    url          = {https://doi.org/10.48550/arXiv.2508.13432},
    doi          = {10.48550/ARXIV.2508.13432},
    eprinttype    = {arXiv},
    eprint       = {2601.01012},
}

@inproceedings{MassandS2019,
    author    = {Sagar Massand and Sunil Simon},
    editor_    = {Sarit Kraus},
    title     = {Graphical One-Sided Markets},
    booktitle = {Proceedings of the 28th International Joint Conference on Artificial Intelligence, {IJCAI}~'19},
    location  = {Macao, China},
    pages     = {492--498},
    publisher = {ijcai.org},
    year      = {2019},
    doi       = {10.24963/ijcai.2019/70},
}

@article{AgarwalEGISV2021,
    author  = {Aishwarya Agarwal and Edith Elkind and Jiarui Gan and Ayumi Igarashi and Warut Suksompong and Alexandros A. Voudouris},
    title   = {Schelling games on graphs},
    journal = {Artificial Intelligence},
    year    = {2021},
    volume  = {301},
    pages   = {103576},
    doi     = {10.1016/j.artint.2021.103576},
}

@inproceedings{ElkindPTZ2020,
    author    = {Edith Elkind and Neel Patel and Alan Tsang and Yair Zick},
    title     = {Keeping Your Friends Close: Land Allocation with Friends},
    editor_    = {Christian Bessiere},
    booktitle = {Proceedings of the 29th International Joint Conference on Artificial Intelligence, {IJCAI}~'20},
    year      = {2020},
    pages     = {318--324},
    publisher = {ijcai.org},
    doi       = {10.24963/ijcai.2020/45},
}

@article{GrossHumbertBBM2022,
    author  = {Nathana{\"{e}}l Gross{-}Humbert and Nawal Benabbou and Aur{\'{e}}lie Beynier and Nicolas Maudet},
    title   = {Sequential and Swap Mechanisms for Public Housing Allocation with Quotas and Neighbourhood-based Utilities},
    journal = {{ACM} Transactions on Economics and Computation},
    year    = {2022},
    volume  = {10},
    number  = {4},
    pages   = {15:1--15:24},
    doi     = {10.1145/3569704},
}

@inproceedings{KnopS2023,
    author    = {Dušan Knop and {\v{S}}imon Schierreich},
    editor_    = {Noa Agmon and Bo An and Alessandro Ricci and William Yeoh},
    title     = {Host Community Respecting Refugee Housing},
    booktitle = {Proceedings of the 22nd International Conference on Autonomous Agents
                  and Multiagent Systems, {AAMAS}~'23},
    location  = {London, United Kingdom},
    year      = {2023},
    pages     = {966--975},
    publisher = {{IFAAMAS}},
}

@inproceedings{LiMXZZ2019,
    author    = {Minming Li and Lili Mei and Yi Xu and Guochuan Zhang and Yingchao Zhao},
    editor_    = {Edith Elkind and Manuela Veloso and Noa Agmon and Matthew E. Taylor},
    title     = {Facility Location Games with Externalities},
    booktitle = {Proceedings of the 18th International Conference on Autonomous Agents and MultiAgent Systems, {AAMAS}~'19},
    location  = {Montreal, QC, Canada},
    year      = {2019},
    pages     = {1443--1451},
    publisher = {{IFAAMAS}},
}

@inproceedings{WangZL2024,
    author    = {Ying Wang and Houyu Zhou and Minming Li},
    editor_    = {Mehdi Dastani and Jaime Sim{\~{a}}o Sichman and Natasha Alechina and Virginia Dignum}, 
    title     = {Positive Intra-Group Externalities in Facility Location},
    booktitle = {Proceedings of the 23rd International Conference on Autonomous Agents and Multiagent Systems, {AAMAS}~'24},
    location  = {Auckland, New Zealand},
    year      = {2024},
    pages     = {1883--1891},
    publisher = {{IFAAMAS}},
}

@inproceedings{LiPWZ2025,
    author    = {Minming Li and Cheng Peng and Ying Wang and Houyu Zhou},
    editor_    = {Sanmay Das and Ann Now{\'{e}} and Yevgeniy Vorobeychik},
    title     = {Group-fair Facility Location Games with Externalities},
    booktitle = {Proceedings of the 24th International Conference on Autonomous Agents and Multiagent Systems, {AAMAS}~'25},
    location  = {Detroit, MI, USA},
    year      = {2025},
    pages     = {2615--2617},
    publisher = {{IFAAMAS}},
}

@article{MumcuS2010,
    author  = {Ayse Mumcu and Ismail Saglam},
    title   = {Stable one-to-one matchings with externalities},
    journal = {Mathematical Social Sciences},
    year    = {2010},
    volume  = {60},
    number  = {2},
    pages   = {154--159},
    doi     = {10.1016/j.mathsocsci.2010.05.002},
}

@inproceedings{BranzeiMRLJ2013,
    author    = {Simina Br{\^{a}}nzei and Tomasz P. Michalak and Talal Rahwan and Kate  Larson and Nicholas R. Jennings},
    editor_    = {Maria L. Gini and Onn Shehory and Takayuki Ito and Catholijn M. Jonker},
    title     = {Matchings with externalities and attitudes},
    booktitle = {Proceedings of the 12th International conference on Autonomous Agents and Multi-Agent Systems, {AAMAS}~'13},
    location  = {Saint Paul, MN, USA},
    year      = {2013},
    pages     = {295--302},
    publisher = {{IFAAMAS}},
}

@article{FisherH2016,
    author  = {James C. D. Fisher and Isa Emin Hafalir},
    title   = {Matching with aggregate externalities},
    journal = {Mathematical Social Sciences},
    year    = {2016},
    volume  = {81},
    pages   = {1--7},
    doi     = {10.1016/j.mathsocsci.2016.03.001},
}

@article{AnshelevichBH2017,
    author  = {Elliot Anshelevich and Onkar Bhardwaj and Martin Hoefer},
    title   = {Stable Matching with Network Externalities},
    journal = {Algorithmica},
    year    = {2017},
    volume  = {78},
    number  = {3},
    pages   = {1067--1106},
    doi     = {10.1007/s00453-016-0197-9},
}

@article{Funk1996,
    author  = {Peter Funk},
    title   = {Auctions with interdependent valuations},
    journal = {International Journal of Game Theory },
    year    = {1996},
    volume  = {25},
    pages   = {51--64},
    doi     = {10.1007/BF01254384},
}

@article{JehielM1996,
    author  = {Philippe Jehiel and Benny Moldovanu},
    title   = {Strategic nonparticipation},
    journal = {The {RAND} Journal of Economics},
    year    = {1996},
    volume  = {27},
    number  = {1},
    pages   = {84--98},
}

@inproceedings{ZhangWWB2018,
  author       = {Chaoli Zhang and
                  Xiang Wang and
                  Fan Wu and
                  Xiaohui Bei},
  editor_       = {Elisabeth Andr{\'{e}} and
                  Sven Koenig and
                  Mehdi Dastani and
                  Gita Sukthankar},
  title        = {Efficient Auctions with Identity-Dependent Negative Externalities},
  booktitle    = {Proceedings of the 17th International Conference on Autonomous Agents and MultiAgent Systems, {AAMAS}~'18},
  pages        = {2156--2158},
  publisher    = {IFAAMAS},
  year         = {2018},
}

\end{document}